\DeclareMathOperator*{\argmin}{arg\,min}       % argmin
\DeclareMathOperator{\I}{\mathbb{I}}           % indicator
\newcommand{\set}[1]{\mathcal{#1}}             % set
\DeclareMathOperator{\tclOp}{\set{T}}
\newcommand{\tcl}[1]{\tclOp^{(#1)}}            % type class
\DeclareMathOperator{\typOp}{\set{P}}
\newcommand{\typ}[1]{\typOp^{(#1)}}            % types
\DeclareMathOperator{\Prb}{Pr}                 % probability
\DeclareMathOperator{\E}{E}                    % expectation
\DeclareMathOperator{\Entr}{H}                 % entropy
\DeclareMathOperator{\Info}{I}                 % mutualinformation
\DeclareMathOperator{\RDOp}{R}
\newcommand\RD[2]{\RDOp_{#1, #2}}              % rate-distortion
\DeclareMathOperator{\dist}{d}                 % distortion measure
\DeclareMathOperator{\D}{D}                    % K-L divergence
\DeclareMathOperator{\IID}{IID}                % IID
\newcommand{\pmf}{\mathcal{P}}
\newtheorem{theorem}{Theorem}
\newtheorem{corollary}{Corollary}
\newtheorem{lemma}{Lemma}
\newtheorem{remark}{Remark}
\begin{document}

\title{Guessing Based on Compressed\\ Side Information}

\author{
  \IEEEauthorblockN{Robert Graczyk\IEEEauthorrefmark{1},
    Amos Lapidoth\IEEEauthorrefmark{1},
    Neri Merhav\IEEEauthorrefmark{2}},
    and Christoph Pfister\IEEEauthorrefmark{1}\\
  \IEEEauthorblockA{
    \IEEEauthorrefmark{1}Signal and Information Processing Laboratory ETH Zurich,
    8092 Zurich, Switzerland\\
    Email: \{graczyk, lapidoth, pfister\}@isi.ee.ethz.ch\\
    \IEEEauthorrefmark{2}The Viterbi Faculty of Electrical and Computer Engineering Technion,\\
    Technion City, Haifa 3200003, Israel\\
    Email: merhav@ee.technion.ac.il
  }
}

%\markboth{Journal of \LaTeX\ Class Files,~Vol.~14, No.~8, August~2015}%
%{Shell \MakeLowercase{\textit{et al.}}: Bare Demo of IEEEtran.cls for IEEE Journals}

\maketitle

% As a general rule, do not put math, special symbols or citations
% in the abstract or keywords.
\begin{abstract}
  A source sequence is to be guessed with some fidelity based on a
  rate-limited description of an observed sequence with which it is
  correlated. 
  The trade-off between the description rate and the exponential
  growth rate of the least power mean of the number of guesses is
  characterized.
\end{abstract}

\begin{IEEEkeywords}
  Compression, Guessing, Side Information.
\end{IEEEkeywords}

% For peer review papers, you can put extra information on the cover
% page as needed:
% \ifCLASSOPTIONpeerreview
% \begin{center} \bfseries EDICS Category: 3-BBND \end{center}
% \fi
%
% For peerreview papers, this IEEEtran command inserts a page break and
% creates the second title. It will be ignored for other modes.
\IEEEpeerreviewmaketitle

\section{Introduction}
Our problem can be viewed as the guessing analogue of the Remote Sensing
problem in lossy source coding \cite{DobTsy62}, \cite{Witsenhausen80},
\cite{WolfZiv70}. As in that problem, the description of a source
sequence is indirect: the rate-limited description is based only on a
noisy version of the sequence. The problems differ, however, in their
objectives: in the Remote Sensing problem the source sequence is
\emph{estimated} (with the least expected distortion), whereas in our
problem it is \emph{guessed} to within some distortion (with the least
power mean of the number of required guesses). Our problem thus
relates to Ar{\i}kan and Merhav's guessing-subject-to-distortion problem
\cite{ArikanMerhav98} in much the same way that the Remote Sensing
problem relates to Shannon's lossy source coding problem
\cite{Shannon59}.

% The special case of zero Hamming distortion is of interest in its own
% right. Upper and lower bounds on the optimal guessing exponent in this
% setting were proposed in \cite{GraLap18}. The present work settles
% the issue by showing that the upper bound is tight.

To put our problem in context, recall that in the guessing problem
pioneered by Massey \cite{Massey94} and Ar{\i}kan \cite{Arikan96}, a
guesser seeks to recover a finite-valued chance variable $X \in \set{X}$
by sequentially producing guesses of the form
\begin{align*}
  &\textnormal{``Is $X = x_1?$''}\\
  &\textnormal{``Is $X = x_2?$''}\\
  &\phantom{\textnormal{``Is $X$ }}\,\vdots
\end{align*}
where $x_1, x_2, \ldots \in \set{X}$, and each guess is answered
truthfully with ``Yes'' or ``No.'' The number of guesses taken until
the first ``Yes,'' i.e., until $X$ is revealed, depends on the
guesser's strategy $\set{G}$ (the order in which the elements of
$\set{X}$ are guessed) and is denoted $G(X)$. Given the probability
mass function (PMF) $P_X$ of $X$ and some $\rho > 0$, Ar{\i}kan showed
\cite{Arikan96} that the least achievable $\rho$-th moment of the
number of guesses $\E[G(X)^\rho]$ required to recover $X$ is
closely related to its R\'{e}nyi entropy:
\begin{align}\label{eq:arikan_bounds}
  &\frac{1}{(1\! +\! \log\!|\set{X}|)^\rho}2^{\rho\Entr_{1\!/\!(1 + \rho)}(P_X)}\! \leq\! \min_{\set{G}} \E[G(X)^\rho]\! \leq\! 2^{\rho\Entr_{1\!/\!(1 + \rho)}(X)},
\end{align}
where $\Entr_{1/(1 + \rho)}(P_X)$ denotes the order-$1/(1 + \rho)$
R\'{e}nyi entropy of $X$. When guessing a length-$n$ random sequence
$X^n \triangleq (X_1, \ldots, X_n)$ whose components are independent
and identically distributed (IID) according to $P_X$,
Inequality~\eqref{eq:arikan_bounds} implies that
\begin{equation}
  \lim_{n \to \infty} \frac{1}{n} \log\Big(\!\min_{\set{G}} \E[G(X^n)^\rho]\Big) = \rho\Entr_{1/(1 + \rho)}(P_X),
\end{equation}
so the R\'{e}nyi entropy of $X$ fully characterizes (up to the factor
$\rho$) the exponential growth rate of the least $\rho$-th moment of the
number of guesses required to recover $X^n$.

Our problem differs from Massey's and Ar{\i}kan's in the following two ways:
\begin{enumerate}
  \item Instead of recovering $X^n$, the guesser need only produce a guess
  $\hat{X}^n \in \hat{\set{X}}^n$ that is close to $X^n$ in the sense that
  \begin{equation}\label{eq:dist_constraint}
    \frac{1}{n} \sum_{i = 1}^n d(X_i, \hat{X}_i) \leq D,
  \end{equation}
  where the distortion measure $d(\cdot, \cdot)\colon \set{X} \times
  \hat{\set{X}} \to \mathbb{R}_{\geq 0}$ and the maximal-allowed
  distortion level $D > 0$ are prespecified.
  % (The alphabets $\set{X}$ and $\hat{\set{X}}$ need not be equal.
  We assume that, for every $x^n \in \set{X}^n$,
  \eqref{eq:dist_constraint} is satisfied by some
  $\hat{x}^{n} \in \hat{\set{X}}^n$; this guarantees the existence of
  a guessing strategy that eventually succeeds.

%  \texttt{We assume $D>0$ and then consider zero}
  
\item Prior to guessing, the guesser is provided with a rate-limited
  description $f(Y^{n}) \in \{0,1\}^{nR}$ of a noisy observation
  $Y^n \in \set{Y}^n$ of $X^{n}$. Based on $f(Y^{n})$, the guesser
  sequentially guesses elements $\hat{X}^n$ of $\hat{\set{X}}^n$ until
  \eqref{eq:dist_constraint} is satisfied.
%  the guesser now guesses the element
%  Concretely, given a positive rate $R$, the guesser chooses
%  a mapping $f\colon \set{Y}^n \to \{0, 1\}^{nR}$, observers the
%  description $f(Y^n)$ of $Y^n$, and then proceeds to guessing some
%  $\hat{X}^n$ satisfying \eqref{eq:dist_constraint}
  (The guesser's strategy $\set{G}$ thus depends on $f(Y^n)$.)
\end{enumerate}
We show that when $(X_1, Y_1), \ldots, (X_n, Y_n)$ are IID according
to $P_{XY}$, the exponential growth rate of the least $\rho$-th moment
of the number of guesses---optimized over the description function $f$
and the guessing strategy $\set{G}$---satisfies the variational
characterization \eqref{eq:target_exponent} of Theorem \ref{thm:main}
ahead.

Along the lines of \cite{Cachin97}, this theorem can be used to assess
the resilience of a password $X^{n}$ against an adversary who has
access to $nR$ bits of a correlated password $Y^{n}$ and is content
with guessing only a fraction $1 - D$ of the symbols of $X^{n}$. (In
this application, the distortion function is the Hamming distance.)
%expressed in terms of \eqref{eq:dist_constraint}, the difference between $X^n$ and
%$\hat{X}^n$ is measured by the symbol-averaged Hamming distortion.)

% Our interest in the proposed variation on the Massey-Arikan guessing
% problem is twofold: First, when interpreted as a secrecy measure \cite{Cachin97},
% \eqref{eq:target_exponent} generalizes the R\'{e}nyi entropy to a setting
% where the adversary's objective can be specified in the sense of
% \eqref{eq:dist_constraint} (e.g., the adversary may seek to obtain only a
% fraction of an $n$-bit secret key $X^n$); and where the adversary has
% access to side-information specified by the conditional law $P_{Y \mid X}$
% and the rate $R$ (e.g., the adversary may access $nR$ bits of a previously
% used key $Y^n$, with the conditional law $P_{Y \mid X}$ measuring how
% similar $Y^n$ is to $X^n$.)

% \texttt{I suggest to take the next paragraph out.}
% {\color{red} When interpreted as an information measure,
% \eqref{eq:target_exponent} can be viewed as the R\'{e}nyi
% generalization of the descriptive complexity in the Remote Sensing
% problem in much the same way that the R\'{e}nyi entropy generalizes
% the Shannon entropy (compare \eqref{eq:target_exponent} to the
% variational characterization of the R\'{e}nyi entropy in
% \cite[Theorem 6.15]{Moser21}.)}

Since our guessing problem is an extension of the
guessing-subject-to-distortion problem studied by Merhav and Ar{\i}kan
\cite{ArikanMerhav98}, their suggested motivation (accounting for the
computational complexity of a rate-distortion encoder as measured by the
number of metric calculations) and proposed applications (betting games,
pattern matching, and database search algorithms) also extend to our setup.
Further applications include  sequential decoding \cite{Arikan96}, compression
\cite{Sundaresan07a}, and task encoding \cite{BraLapPfi17, BraLapPfi19}.

Numerous other variations on the Massey-Ar{\i}kan guessing problem were
studied over the years. Sundaresan \cite{Sundaresan07b} derived an
expression for the smallest guessing moment when the source
distribution is only partially known to the guesser; in
\cite{MerhavCohen20, SalBeiCohMed17}, the authors constructed and
analyzed optimal decentralized guessing strategies (for multiple
guessers that cannot communicate); in \cite{WeiSha18}, Weinberger and
Shayevitz quantified the value of a single bit of side-information
provided to the guesser prior to guessing; in \cite{BeiCalChrDufMed18},
the authors studied the guessing problem using an information-geometric
approach; and in \cite{GraLap20} and \cite{BraLapPfi19} the authors
studied the distributed guessing problem on Gray-Wyner and
Stelpian-Wolf networks.

The above distributed settings dealt, however, only with ``lossless''
guessing, where the guessing has to be exact. Our present setting
maintains, to some degree, a distributed flavor, but allows for
``lossy'' guessing, i.e., with some fidelity.

\section{Problem Statement and Notation}

Consider $n$ pairs $\{(X_i, Y_i)\}_{i = 1}^n$ that are drawn
independently, each according to a given PMF $P_{XY}$ on the finite
Cartesian product $\set{X} \times \set{Y}$:
\begin{equation}\label{eq:iid_source}
  \{(X_i, Y_i)\}_{i = 1}^n \sim \IID P_{XY}.
\end{equation}
Define the sequences
\begin{equation}
  X^n \triangleq \{X_i\}_{i = 1}^n,\: Y^n
\triangleq \{Y_i\}_{i = 1}^n,
\end{equation}
with $\{X_i\}_{i = 1}^n$ being $\IID P_X$, where $P_X$ is the
$X$-marginal of $P_{XY}$, and likewise $\{Y_i\}_{i = 1}^n$ being
$\IID P_Y$. By possibly redefining $\set{X}$ and $\set{Y}$, we assume
without loss of generality that $P_{X}$ and $P_{Y}$ are positive.
A guesser wishes to produce a sequence $\hat{X}^n$, taking
values in a finite $n$-fold Cartesian product set $\hat{\set{X}}^n$, that is ``close'' to $X^n$ in the
sense that
\begin{equation}\label{eq:thres_to_sat}
  \bar{\dist}(X^n, \hat{X}^n) \leq D,
\end{equation}
where $D > 0$ is some prespecified maximally-allowed
distortion, and 
\begin{equation}
  \bar{\dist}(x^n, \hat{x}^n) \triangleq \frac{1}{n}\sum_{i = 1}^n
  \dist(x_i, \hat{x}_i)
\end{equation}
with
\begin{equation}
  \dist\colon \set{X} \times \hat{\set{X}} \to \mathbb{R}_{\geq 0}
\end{equation}
some prespecified distortion function. We assume that $d(\cdot, \cdot)$
and $D$ are such that for each $x^n \in \set{X}^n$ there exists some 
$\hat{x}^n \in \hat{\set{X}}^n$ for which \eqref{eq:thres_to_sat} is
satisfied,
\begin{equation}\label{eq:valid_x_hat_assumption}
  \forall x^n \in \set{X}^n\: \exists \hat{x}^n \in \hat{\set{X}}^n\colon
  \bar{\dist}(x^n, \hat{x}^n) \leq D.
\end{equation}
This guarantees that such $\hat{X}^{n}$ can be found and in no-more-than 
$|\hat{\set{X}}|^n$ guesses.

Courtesy of a ``helper'' $f_n\colon \set{Y}^n \to \{0, 1\}^{nR}$, the guesser
is provided, prior to guessing, with an $nR$-bit description $f_{n}(Y^{n})$
of $Y^{n}$. Based on $f_n(Y^n)$, the guesser produces a ``guessing
strategy'' (also called a ``guessing function'')
\begin{equation}
  \set{G}_n\bigl(\,\cdot\, | f_{n}(Y^n)\bigr)\colon \{1, \ldots, |\hat{\set{X}}^n|\} \to \hat{\set{X}}^n,
\end{equation}
with the understanding that its first guess is
$\set{G}_{n}\bigl(1 \big| f_{n}(Y^n)\bigr)$, followed by
$\set{G}_{n}\bigl(2|f_{n}(Y^n)\bigr)$, etc. Thus, the guesser first asks
\begin{equation*}
  \text{``Does } \set{G}_{n}\bigl(1 \big| f_{n}(Y^n)\bigr) \text{ satisfy
  \eqref{eq:thres_to_sat}?''}
\end{equation*}
If the answer is ``yes,'' the guessing terminates and
$\set{G}_{n}\bigl(1 \big| f_{n}(Y^n)\bigr) \in \hat{\set{X}}^{n}$ is
produced. Otherwise the guesser asks
\begin{equation*}
  \text{``Does } \set{G}_{n}\bigl(2 \big| f_{n}(Y^n)\bigr) \text{ satisfy
    \eqref{eq:thres_to_sat}?''}
\end{equation*}
etc. Since guessing the same sequence twice is pointless, we assume
(without loss of optimality) that, for every value of $f_{n}(y^{n})$,
the mapping $\set{G}_n(\,\cdot\,|f_{n}(y^{n}))$ is injective and hence---since its
domain and codomain are of equal cardinality---bijective.
%the sets $\set{X}$ and
%$\hat{\set{X}}$, the distortion function $\dist$, and the
%maximally-allowed distortion $D$ are such that
This and Assumption~\eqref{eq:valid_x_hat_assumption},
%we  to each $x^n \in \set{X}^n$ there corresponds at least one
%$\hat{x}^n \in \hat{\set{X}}^n$ for which
%$\bar{d}(x^n, \hat{x}^n) \leq D$, i.e.,
%\begin{equation}\label{eq:valid_x_hat_assumption}
%  \forall x^n \in \set{X}^n\: \exists \hat{x}^n \in \hat{\set{X}}^n\colon
%  \bar{\dist}(x^n, \hat{x}^n) \leq D.
%\end{equation}
%This guarantees the existence of a guessing strategy that eventually
%succeeds, and we henceforth only allow such strategies. This allows us
allow us to define
% at the guesser will eventually succeed.
%define $G_n(x^n)$ as the least integer $i$ for which $\set{G}_n(i)$
%and $x^n$ satisfy \eqref{eq:thres_to_sat},
\begin{equation}\label{eq:def_guessing_fn}
  G_n\bigl(x^n \big| f_{n}(y^{n})\bigr) \triangleq \min\big\{i \geq 1\colon \bar{\dist}(x^n,
  \set{G}_n\bigl(i \big| f_{n}(y^{n})\bigr) \leq D\big\}
\end{equation}
%By by Assumption \eqref{eq:valid_x_hat_assumption} and
%the assumption that $\set{G}_n$ is surjective, the set on the RHS of
%\eqref{eq:def_guessing_fn} is non-empty and hence $G_n(x^n)$ is
%well-defined for all $x^n \in \set{X}^n$.
as the number of required guesses when $X^{n} = x^{n}$ and
$f_{n}(Y^{n}) = f_{n}(y^{n})$.

Given a positive constant $\rho$, we seek the least exponential
growth rate in~$n$ of the $\rho$-th moment of the number of guesses
$\E[G_n(X^n \mid f_n(Y^n))^\rho]$:%. That is, we wish to compute
\begin{equation}\label{eq:to_characterize}
  \lim_{n \to \infty} \frac{1}{n} \log\Bigl(\min_{f_n} \min_{\set{G}_n}
  \E[G_n(X^n \mid f_n(Y^n))^\rho]\Big)
\end{equation}
(when the limit exists), where the minima in
\eqref{eq:to_characterize} are over all maps $f_n\colon \set{Y}^n
\to \{0, 1\}^{nR}$ and all guessing strategies~$\set{G}_n$.
Theorem~\ref{thm:main} below asserts that the limit exists and
provides a variational
characterization for it.%of \eqref{eq:to_characterize}.

To state the theorem, we need some additional notation.  Given finite sets
$\set{V}$ and $\set{W}$, let $\pmf(\set{V})$ denote the family of PMFs
on $\set{V}$, and $\pmf(\set{V} \mid \set{W})$ the family of PMFs on
$\set{V}$ indexed by $\set{W}$: for every
$P(\cdot \mid \cdot) \in \pmf(\set{V} \mid \set{W})$ and every
$w \in \set{W}$, we have $P(\cdot \mid w) \in \pmf(\set{V})$. Given
PMFs $P_W \in \pmf(\set{W})$ and
$P_{V \mid W} \in \pmf(\set{V} \mid \set{W})$, we use $P_W P_{V \mid W}$
to denote the joint PMF $P_{W}(w) \> P_{V \mid W}(v \mid w)$ on
$\set{W} \times \set{V}$ (in this context, $P_{V \mid W}(\cdot \mid \cdot)$
is the conditional PMF of $V$ given $W$.)

%$P_W \circ P_{V \mid W}$ on $\set{W} \times \set{V}$ by
%$P_WP_{V \mid W}$.
%satisfying
%\begin{equation}\label{eq:dist_sat}
%  \tdist(\set{G}_n(G_n(x^n \!\mid\! f_n(y^n))), x^n) \leq D
%  \:\:\:\: \forall\, (x^n\!, y^n) \!\in\! \set{X}^n \!\times\! \set{Y}^n.
%\end{equation}

\begin{theorem}\label{thm:main}
The limit in \eqref{eq:to_characterize} exists and equals
\begin{equation}
  \sup_{Q_Y} \inf_{Q_{U \mid Y}: \Info(Q_{Y;U}) \leq R}
  \sup_{Q_{X \mid YU}} \Big(\rho\RD{d}{D}(Q_{X \mid U}) - \D(Q_{XYU} \| P_{XY}Q_{U
  \mid Y})\Big),\label{eq:target_exponent}
\end{equation}
where the optimization is over $Q_Y \in \pmf(\set{Y})$, $Q_{U \mid Y}
\in \pmf(\set{U} \mid \set{Y})$, $Q_{X \mid YU} \in \pmf(\set{X}
\mid \set{Y} \times \set{U})$, and the choice of the finite set
$\set{U}$; where $\Info(Q_{Y;U})$ is the mutual information between $Y$ and
$U$; $\RD{d}{D}(Q_{X \mid U})$ is the conditional rate-distortion (R-D)
function of $X$ given $U$:
\begin{equation}\label{eq:def_rd_cond}
  \RD{d}{D}(Q_{X \mid U}) \triangleq \min_{Q_{\hat{X} \mid X, U}\colon
  \E[d(X, \hat{X})] \leq D} \Info(Q_{X; \hat{X} \mid U}),
\end{equation}
where $\Info(Q_{X; \hat{X} \mid U})$ is the conditional mutual
information between $X$ and $\hat{X}$ given $U$; and $\D(\cdot \| \cdot)$
denotes relative entropy. All the expressions in \eqref{eq:target_exponent}
are evaluated w.r.t.~to $Q_{XYU} = Q_YQ_{U \mid Y}Q_{X \mid YU}$, and
those in \eqref{eq:def_rd_cond} are w.r.t.~$Q_{\hat{X} \mid X, U}Q_{XU}$,
with $Q_{XU}$ being the $(X, U)$-marginal of $Q_YQ_{U \mid Y}Q_{X \mid YU}$.
\end{theorem}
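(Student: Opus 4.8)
The plan is to establish the characterization by proving matching upper (achievability) and lower (converse) bounds on the exponent in \eqref{eq:to_characterize}, using the method of types throughout. The outer supremum over $Q_Y$ reflects that nature may present any empirical distribution of the observation $Y^n$; the infimum over $Q_{U\mid Y}$ subject to $\Info(Q_{Y;U})\le R$ captures the best description the helper can form under the rate constraint (with $U$ playing the role of a test channel / codebook reproduction alphabet for $Y^n$); and the innermost supremum over $Q_{X\mid YU}$ together with the subtraction of $\D(Q_{XYU}\|P_{XY}Q_{U\mid Y})$ is the usual "worst-case source type tilted against its probability'' that produces Rényi-type exponents in guessing problems.

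For the \textbf{achievability} direction, I would first fix a conditional type $Q_{U\mid Y}$ attaining (approximately) the inner infimum for each dominant $Q_Y$, and build a covering codebook for $Y^n$: by the covering lemma / type-covering argument, roughly $2^{n\Info(Q_{Y;U})}\le 2^{nR}$ codewords $u^n$ suffice so that every $y^n$ of the relevant type is jointly typical with some codeword. The helper's map $f_n$ sends $Y^n$ to the index of a covering $u^n$. Given $u^n$, the guesser orders candidate reconstructions $\hat x^n$ by scanning conditional types: for each joint type $Q_{X\mid U}$ compatible with $u^n$, it guesses the roughly $2^{n\RD{d}{D}(Q_{X\mid U})}$ reconstruction sequences that $D$-cover the corresponding conditional type class, ordering type classes so that more probable ones (under the true $P_{XY}$) are guessed first. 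The $\rho$-th moment then splits over source types; the contribution of type $Q_{X\mid YU}$ scales like $2^{n\rho\RD{d}{D}(Q_{X\mid U})}$ times the probability $2^{-n\D(Q_{XYU}\|P_{XY}Q_{U\mid Y})}$ of that type, and taking the dominant type and letting $n\to\infty$ yields \eqref{eq:target_exponent} as an upper bound. The conditional rate-distortion function enters precisely because, given the side-information index $u^n$, the guesser need only cover $X^n$ to within distortion $D$ conditionally on $u^n$.

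For the \textbf{converse}, I would fix an arbitrary helper $f_n$ and guessing strategy $\set{G}_n$ and lower-bound $\E[G_n(X^n\mid f_n(Y^n))^\rho]$. The key combinatorial fact is that, for any fixed description value and any conditional type class, a bounded number of guesses can $D$-cover only about $2^{n\RD{d}{D}(\cdot)}$ of the sequences in that class, so a constant fraction of the class requires at least $\Omega(2^{n\RD{d}{D}(\cdot)})$ guesses; this lower-bounds $G_n^\rho$ by $2^{n\rho\RD{d}{D}}$ on a large subset. The rate constraint forces the number of distinct descriptions to be at most $2^{nR}$, which (again via a type-counting argument on $Y^n$) limits how finely the $u$-information can discriminate and produces the constraint $\Info(Q_{Y;U})\le R$ after identifying $U$ with a suitably randomized or type-indexed version of $f_n(Y^n)$. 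Lower-bounding the expectation by the single dominant source type and carefully exchanging the order of the optimizations (matching the $\sup_{Q_Y}\inf_{Q_{U\mid Y}}\sup_{Q_{X\mid YU}}$ nesting) gives the converse.

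The hard part will be the converse, specifically justifying the $\sup$--$\inf$--$\sup$ order and the emergence of the clean mutual-information constraint $\Info(Q_{Y;U})\le R$ from an arbitrary deterministic helper $f_n$. One must extract from $f_n$ an auxiliary variable $U$ (its range has only $2^{nR}$ values, but the number of $Y^n$-types and conditional types grows polynomially, so a single-letterization / type-decomposition argument is needed) and argue that the adversarial choice of the source type $Q_Y$ can force the helper into the worst rate-constrained description, which is exactly where the outer supremum over $Q_Y$ and the inner infimum over $Q_{U\mid Y}$ interact. I would expect to need a minimax / topological argument (compactness of the simplices and continuity of the objective, together with the convexity of $\Info(Q_{Y;U})$ in $Q_{U\mid Y}$) to interchange and match the two directions' optimizations, and a support-cardinality bound on $\set{U}$ (via Carathéodory) to ensure the optimizing $\set{U}$ may be taken finite.
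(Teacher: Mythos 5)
Your achievability sketch follows essentially the paper's route: a type-covering codebook in $\set{U}^n$ for each type of $Y^n$ (with the conditional type $Q_{U\mid Y}$ chosen to approximately attain the inner infimum at rate $R-\epsilon'$), a conditional rate-distortion covering of each conditional type class of $X^n$ given $u^n$, and a split of the $\rho$-th moment over the polynomially many joint types, each contributing $2^{-n\D(Q_Y\|P_Y)}2^{-n\D(Q_{X\mid YU}\|P_{X\mid Y})}2^{n\rho\RD{d}{D}(Q_{X\mid U})}$. The only cosmetic difference is how the guesser copes with not knowing the realized type: you order type classes by probability, whereas the paper invokes an interlaced-guessing lemma to assume the type $Q_{XYU}$ is revealed at a cost of a polynomial (hence exponentially negligible) factor. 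Either device works.

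The converse is where your proposal has a genuine gap, and it is exactly at the point you flag as hard. Your plan rests on a per-description sphere-covering count ("$N$ guesses can $D$-cover only a bounded portion of a type class"), which by itself yields a bound involving the \emph{unconditional} rate-distortion function of the realized $X$-type and does not explain how the conditional R-D function $\RD{d}{D}(Q_{X\mid U})$, the auxiliary variable $U$, and the constraint $\Info(Q_{Y;U})\le R$ emerge from an arbitrary deterministic $f_n$; saying that $U$ is "a suitably randomized or type-indexed version of $f_n(Y^n)$" and that a "minimax/topological argument" will sort out the $\sup$--$\inf$--$\sup$ nesting is a statement of the goal, not a mechanism. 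The paper's converse uses a different and fully specified machinery: (i) a change-of-measure/Jensen inequality $\E_{P}[G_n^\rho]\ge 2^{\rho\E_Q[\log G_n]-\D(Q_{X^nY^n}\|P_{X^nY^n})}$ valid for an \emph{arbitrary} auxiliary law $Q$, which is what makes $Q_Y$ and $Q_{X\mid YU}$ free parameters and produces the two suprema and the relative-entropy penalty; (ii) the single-letter identification $U_i=(X^{i-1},Y^{i-1},M)$ with $M=f_n(Y^n)$, a time-sharing variable $T$, and the Markov chain $X^{i-1}\to(M,Y^{i-1})\to Y_i$ under $Q$, which together give $\Info_{\tilde Q}(Y;U)\le\frac1n\Entr(M)\le R$ (note that the past symbols must be folded into $U_i$ for the telescoping $\sum_i\Info(Y_i;M\mid Y^{i-1})=\Info(Y^n;M)$ to go through --- identifying $U$ with $M$ alone does not single-letterize); and (iii) a reverse-Wyner entropy lower bound on $\E_Q[\log G_n]$ followed by the introduction of per-coordinate optimized reproductions $\hat X'_i$ at distortion levels $D_i$ with $\frac1n\sum_i D_i\le D$, which is what turns the mutual-information chain into $n\RD{d}{D}(\tilde Q_{X\mid U})$. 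Without supplying these (or equivalent) steps --- in particular the auxiliary-variable identification and the handling of the per-letter distortion budget --- your converse does not close.
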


\begin{remark}
  As shown in Appendix~\ref{app:card}, restricting $U$ to take values
  in a set of cardinality $|\set{Y}| + 1$ does not
  alter~\eqref{eq:target_exponent}. Consequently, the suprema and
  infimum can be replaced by maxima and minimum respectively.
\end{remark}

\begin{remark}
  \label{rem:A}
  In the special case where the help is direct, i.e., when $Y$
  equals~$X$ %with probability one
  under $P_{XY}$ so
  \begin{IEEEeqnarray}{rCl}
    \label{eq:direct_help10}
    (x \neq y) & \implies & \bigl(P_{XY}(x,y) = 0\bigr),
  \end{IEEEeqnarray}
  Theorem \ref{thm:main} recovers Theorem 2 of \cite{GraLap18}.
  \end{remark}
  % when, under $P_{XY}$, $X = Y$ with probability 1.
  \begin{IEEEproof}[Proof of Remark~\ref{rem:A}]
    This can be seen by first noting that the relative entropy
    in~\eqref{eq:target_exponent} is finite only when
    $Q_{XYU} \ll P_{XY}Q_{U \mid Y}$, whence
    $Q_{XY} \ll P_{XY}$.\footnote{We use $Q \ll P$ to indicate that
      $Q$ is absolutely continuous w.r.t.\ $P$.} This
    and~\eqref{eq:direct_help10} imply that the inner supremum
    in~\eqref{eq:target_exponent} is attained when
  % observation follows from the fact that
  % $\D(Q_{XYU} \| P_{XY}Q_{U \mid Y}) = \infty$ if for some
  % $(x, y, u) \in \set{X} \times \set{Y} \times \set{U}$,
  % \begin{equation*}
  %   P_{XY}(x, y)Q_{U \mid Y}(u \mid y) = 0 \textnormal{ and } Q_{XYU}(x, y, u) > 0.
  % \end{equation*}
  % Therefore, when $P_{XY}(x, y) = P_X(x)\I(y = x)$ (with $\I(\cdot)$
  % denoting the indicator function that equals 1 when the condition in
  % its argument is true, and equals 0 otherwise), the last supremum in
  % \eqref{eq:target_exponent} must set
  $X$ and $Y$ are equal also under $Q_{XYU}$, and
  \begin{equation}\label{eq:x_equals_y_under_q}
    Q_{X \mid YU}(x \mid y, u) = \I(x = y).
  \end{equation}
  Using \eqref{eq:x_equals_y_under_q} and denoting expectation w.r.t. $Q_{XYU}$
  by $\E_{Q_{XYU}}$, we simplify $\D(Q_{XYU} \| P_{XY}Q_{U \mid Y})$ as
  follows:
  \begin{align}
    \D(Q_{XYU} \| P_{XY}Q_{U \mid Y}) &= \D(Q_YQ_{U \mid Y}Q_{X \mid YU} \| P_{XY}Q_{U \mid Y})\\
    &= \E_{Q_{XYU}}\!\!\Bigg[\log\left(\frac{Q_Y(Y)Q_{U \mid Y}(U \mid Y)Q_{X \mid YU}(X \mid Y, U)}{P_{XY}(X, Y)Q_{U \mid Y}(U \mid Y)}\right)\Bigg]\\
    &= \E_{Q_{XYU}}\!\!\Bigg[\log\left(\frac{Q_Y(Y)Q_{X \mid YU}(X \mid Y, U)}{P_{XY}(X, Y)}\right)\Bigg]\\
    &= \E_{Q_{XYU}}\!\!\Bigg[\log\left(\frac{Q_Y(Y)\I(X = Y)}{P_X(X)\I(Y = X)}\right)\Bigg]\\
    &= \sum_{(x, y) \in \set{X} \times \set{Y}} Q_Y(y)\I(x =
      y)\log\left(\frac{Q_Y(y)\I(x = y)}{P_X(x)\I(y = x)}\right). \label{eq:x_and_y_only}
  \end{align}
  To continue from \eqref{eq:x_and_y_only}, note that, by~\eqref{eq:x_equals_y_under_q},
  \begin{equation}
    Q_Y(y)\I(x = y) = Q_X(x)\I(y = x),
  \end{equation}
  so \eqref{eq:x_and_y_only} implies that
  \begin{align}
    \D(Q_{XYU} \| P_{XY}Q_{U \mid Y}) &= \sum_{(x, y) \in \set{X} \times \set{Y}} Q_X(x)\I(y = x)\log\left(\frac{Q_X(x)\I(y = x)}{P_X(x)\I(y = x)}\right)\\
    &= \sum_{x \in \set{X}} Q_X(x)\log\left(\frac{Q_X(x)}{P_X(x)}\right)\\
    &= \D(Q_X \| P_X).\label{eq:simplified_rel_entropy}
  \end{align}
  Having dispensed with the inner supremum
  in~\eqref{eq:target_exponent}, we note that, because
  $X$ and $Y$ are equal under $Q_{XYU}$, we can replace
  the outer supremum in \eqref{eq:target_exponent} with one over $Q_X$,
  and the infimum with one over $Q_{U \mid X}$. From this and
  \eqref{eq:simplified_rel_entropy} we conclude that
  \eqref{eq:target_exponent} reduces to
  \begin{equation}
    \sup_{Q_X} \inf_{Q_{U \mid X}: \Info(Q_{X;U}) \leq R} \Big(\rho\RD{d}{D}(Q_{X \mid U}) - \D(Q_X \| P_X)\Big),\label{eq:target_exponent_x_equals_y}
  \end{equation}
  which recovers Theorem 2 of \cite{GraLap18}.
\end{IEEEproof}

\begin{remark}\label{rem:B}
  When the help is useless because $R$ is zero or because $X$ and $Y$ are
  independent (under $P_{XY}$), Theorem 1 reduces to Corollary 1 of
  \cite{ArikanMerhav98}.
  \end{remark}
  \begin{IEEEproof}[Proof of Remark~\ref{rem:B}]
  To show this, we begin by considering the choice of $U$ as
  deterministic and thus establish that~\eqref{eq:target_exponent} is upper
  bounded by
  \begin{equation}
    \label{eq:netflix10}
    \sup_{Q_X} \Big( \rho\RD{d}{D}(Q_X) - \D(Q_{X} \| P_{X}) \Big),
  \end{equation}
  which is the expression in Corollary 1 of \cite{ArikanMerhav98}. It
   remains to show that, when $R = 0$ or when $X$ and $Y$ are
   independent, this is also a lower bound.

   We begin with $R=0$. In this case, the constraint in the infimum
   in~\eqref{eq:target_exponent} implies that $Y$ and $U$ are
   independent under $Q_{XYU}$, so
%  $Q_{U \mid Y} = Q_U$ and
  \begin{equation}\label{eq:Ay_and_u_independent}
    Q_{XYU} = Q_Y \, Q_U \, Q_{X \mid YU}.
  \end{equation}
  A lower bound results when we restrict the inner supremum to
  $Q_{X|YU}(x|y,u)$ that is determined by $x$ and $y$, so that $Q_{XYU}$
  has the form $Q_{U}Q_{XY}$. With this form, the objective function
  in~\eqref{eq:target_exponent} reduces to 
  \begin{equation}
    \rho\RD{d}{D}(Q_{X}) - \D(Q_{XY} \| P_{XY})
  \end{equation}
  which depends on $Q_{XYU}$ only via its marginal $Q_{XY}$. This
  allows us to dispense with the infimum to obtain
  \begin{equation}
    \label{eq:netflix50}
    \sup_{Q_{XY}} \Big(\rho\RD{d}{D}(Q_{X}) - \D(Q_{XY} \| P_{XY})\Big),
  \end{equation}
  which is attained when $Q_{Y|X}$ equals $P_{Y|X}$, whence it is
  equal to~\eqref{eq:netflix10}. 

  Having established that~\eqref{eq:netflix10} is a lower bound
  on~\eqref{eq:target_exponent} when $R=0$, we now show that it is
  also a lower bound on~\eqref{eq:target_exponent} when $X$ and $Y$
  are independent. In this case we obtain the lower bound by
  restricting the inner supremum to be over $Q_{X|YU}(x|y,u)$ that are
  determined by $x$ alone, so that $Q_{XYU}$ has the form
  $Q_{X}Q_{UY}$. With this form (and with $X$ and $Y$ being
  independent under $P_{XY}$), the objective function
  in~\eqref{eq:target_exponent} reduces to
  \begin{equation}
    \rho\RD{d}{D}(Q_{X}) - \D(Q_{X}Q_{YU} \| P_{X}P_{Y} Q_{U|Y})
  \end{equation}
  which simplifies to
  \begin{equation}
    \rho\RD{d}{D}(Q_{X}) - \D(Q_{X}Q_{Y} \| P_{X}P_{Y}).
  \end{equation}
  Again $U$ disappears, and we are back at~\eqref{eq:netflix50}, which
  evaluates to the desired lower bound.
  \end{IEEEproof}

\section{Achievability}
In this section, we prove the direct part of Theorem \ref{thm:main},
namely, that when $\{(X_i, Y_i)\}_{i = 1}^n$ are IID according to
$P_{XY}$, then for every $\epsilon > 0$ there exists a sequence
of rate-$R$ helpers $\{f_n\}$ and guessing strategies $\{\set{G}_n\}$
satisfying
\begin{align}
  &\limsup_{n \to \infty} \frac{1}{n}\log(\E[G_n(X^n
  \mid f_n(Y^n))^\rho])\nonumber\\
  &\quad\quad \leq \sup_{Q_Y} \inf_{Q_{U \mid Y}: \Info(Q_{Y;U}) \leq R}
  \sup_{Q_{X \mid YU}} \Big(\rho\RD{d}{D}(Q_{X \mid U}) - \D(Q_{XYU} \| P_{XY}Q_{U \mid Y})\Big)
  + \epsilon.\label{eq:target_ub}
\end{align}

\begin{proof}
Since we are only interested in the behavior of $\E[G_n(X^n
\mid f_n(Y^n))^\rho]$ as $n$ tends to infinity, we shall only consider
large values of $n$.

We begin by constructing the helper $f_n$. To do so, we
shall use the Type-Covering lemma \cite[Lemma 1]{berger71}, \cite[Lemma 9.1]{csiszar_korner_11}, \cite[Lemma 2.34]{Moser21} that we
restate here for the reader's convenience. Given finite sets $\set{V}$
and $\set{W}$, let $\pmf_n(\set{V})$ denote the family of
``types of denominator $n$'' on $\set{V}$, i.e., the PMFs $P(\cdot)
\in \pmf(\set{V})$ for which $nP(v)$ is an integer for all $v \in \set{V}$.
By a ``conditional type on $\set{V}$ given $\set{W}$'' we refer to a
conditional PMF $P(\cdot \mid \cdot) \in \pmf(\set{V} \mid \set{W})$ for
which $P(\cdot \mid w)$ is a type (of some denominator $n(w)$) for every $w
\in \set{W}$. Given a sequence $v^n \in \set{V}^n$, the ``empirical distribution
of $v^n$'' is the (unique) type $P \in \pmf_n(\set{V})$ for which $P(v') =
\frac{1}{n}|\{i\colon v_i = v'\}|$ for every $v' \in \set{V}$. And given $P
\in \pmf_n(\set{V})$, we use $\tcl{n}(P)$ to denote the ``type class'' of
$P$, i.e., the set of all sequences $v^n \in \set{V}^n$ whose empirical
distribution is $P$.

\begin{lemma}[Type-Covering lemma]\label{lmm:tcl}
Let $\set{V}$ and $\set{W}$ be finite sets. For every $\epsilon > 0$
there exists some $n_0(\epsilon)$ such that for all $n$ exceeding
$n_0(\epsilon)$ the following holds: For every $Q_V \in \pmf_n(\set{V})$
and every conditional type $Q_{W \mid V}$ for which $Q_VQ_{W \mid V}
\in \pmf_n(\set{V} \times \set{W})$,
there exists a codebook $\set{C} \subseteq \set{W}^n$ satisfying
\begin{equation}
  |\set{C}|  \leq  2^{n(\Info(Q_{V; W}) + \epsilon)}\label{eq:tcl_prop_1}
\end{equation}
and
\begin{equation}
  \forall v^n\! \in\! \tcl{n}(Q_V) \, \exists w^n\!\!
  \in\! \set{C}\colon\! (v^n, w^n) \in \tcl{n}(Q_VQ_{W
  \mid V}).\label{eq:tcl_prop_2}
\end{equation}
% \begin{align}
%   &|\set{C}| \leq 2^{n(\Info(Q_{V; W})
%   + \epsilon)}\label{eq:tcl_prop_1}\\
%   &\forall v^n\! \in\! \tcl{n}(Q_V) \, \exists w^n\!\!
%   \in\! \set{C}\colon\! (v^n, w^n) \in \tcl{n}(Q_VQ_{W
%   \mid V}).\label{eq:tcl_prop_2}
% \end{align}
\end{lemma}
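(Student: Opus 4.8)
The plan is to establish the lemma by the standard random-covering argument. Fix $\epsilon > 0$, a type $Q_V \in \pmf_n(\set{V})$, and a conditional type $Q_{W \mid V}$ with $Q_V Q_{W \mid V} \in \pmf_n(\set{V} \times \set{W})$; let $Q_W$ denote the $W$-marginal of $Q_V Q_{W \mid V}$ and write $I \triangleq \Info(Q_{V;W})$. I would draw the codebook $\set{C} = \{W^n(1), \ldots, W^n(M)\}$ by choosing its $M \triangleq \lceil 2^{n(I + \epsilon/2)} \rceil$ entries independently and uniformly at random from the type class $\tcl{n}(Q_W)$, and then show that, with probability approaching one, this random codebook satisfies the covering property~\eqref{eq:tcl_prop_2}. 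The cardinality bound~\eqref{eq:tcl_prop_1} then holds by construction, since for $n$ large enough $\lceil 2^{n(I + \epsilon/2)} \rceil \le 2^{n(I + \epsilon)}$.

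First I would lower-bound the probability $p$ that a single randomly drawn codeword covers a fixed $v^n \in \tcl{n}(Q_V)$, i.e.\ that $(v^n, W^n(1)) \in \tcl{n}(Q_V Q_{W \mid V})$. Because a uniform draw on $\tcl{n}(Q_W)$ is favorable exactly when its conditional empirical type given $v^n$ equals $Q_{W \mid V}$, this probability is the ratio of the conditional type-class size $|\{w^n : (v^n, w^n) \in \tcl{n}(Q_V Q_{W \mid V})\}|$ to $|\tcl{n}(Q_W)|$. Invoking the standard type-size estimates $(n+1)^{-|\set{V}||\set{W}|}\, 2^{n\Entr(Q_{W \mid V} \mid Q_V)} \le |\{w^n : (v^n, w^n) \in \tcl{n}(Q_V Q_{W \mid V})\}|$ and $|\tcl{n}(Q_W)| \le 2^{n\Entr(Q_W)}$, together with the identity $\Entr(Q_W) - \Entr(Q_{W \mid V} \mid Q_V) = I$, yields
\begin{equation}
  p \ge (n+1)^{-|\set{V}||\set{W}|}\, 2^{-nI}.
\end{equation}

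Next I would bound the failure probability. The probability that a fixed $v^n$ is covered by none of the $M$ independent codewords is $(1-p)^M \le e^{-pM}$, so by a union bound over the at most $|\tcl{n}(Q_V)| \le 2^{n\Entr(Q_V)} \le 2^{n\log|\set{V}|}$ sequences in $\tcl{n}(Q_V)$, the probability that the random codebook violates~\eqref{eq:tcl_prop_2} is at most $2^{n\log|\set{V}|}\, e^{-pM}$. With the chosen $M$ one has $pM \ge (n+1)^{-|\set{V}||\set{W}|}\, 2^{n\epsilon/2}$, which grows exponentially in $n$; hence $e^{-pM}$ decays doubly-exponentially and overwhelms the single-exponential factor $2^{n\log|\set{V}|}$. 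The failure probability is therefore strictly below $1$ for all $n$ exceeding some $n_0(\epsilon)$, so a codebook meeting both~\eqref{eq:tcl_prop_1} and~\eqref{eq:tcl_prop_2} must exist, which is what the lemma asserts.

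The argument is conceptually routine, so the main obstacle is bookkeeping: one must track the polynomial type-counting factors $(n+1)^{\pm|\set{V}||\set{W}|}$ and confirm that they are absorbed into the $\epsilon/2$ slack in the exponent, and one must verify that the conditional type class of completions of $v^n$ is nonempty with the claimed size---which holds precisely because $Q_V Q_{W \mid V}$ was assumed to be a genuine type of denominator $n$ whose $W$-marginal is $Q_W$. Because $e^{-pM}$ decays doubly-exponentially while the number of sequences to cover grows only singly-exponentially, the union bound has ample room to spare, so the only real care is in choosing $M$ and $n_0(\epsilon)$ so that the rate bound and the covering guarantee hold simultaneously.
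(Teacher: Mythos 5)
Your proof is correct. The paper does not prove this lemma itself---it restates it and cites Berger, Csisz\'{a}r--K\"{o}rner (Lemma 9.1), and Moser (Lemma 2.34)---and your random-selection argument is essentially the standard proof found in those references: the probability that a uniform draw from $\tclOp^{(n)}(Q_W)$ lands in the conditional type class of a fixed $v^n$ is at least $(n+1)^{-|\mathcal{V}||\mathcal{W}|}2^{-n\Info(Q_{V;W})}$, so with $M \approx 2^{n(\Info(Q_{V;W})+\epsilon/2)}$ codewords the per-sequence failure probability decays doubly-exponentially and survives the singly-exponential union bound over $\tclOp^{(n)}(Q_V)$. Your bookkeeping also correctly yields an $n_0(\epsilon)$ that is uniform over the choice of $Q_V$ and $Q_{W\mid V}$ (the polynomial factors depend only on the alphabet sizes), which is what the lemma's quantifier order requires.
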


Lemma \ref{lmm:tcl} is applied as follows: For every
$Q_Y \in \pmf_n(\set{Y})$, we first define
\begin{align}\label{eq:optimal_cond}
  Q_{U \mid Y}^*(Q_Y) &\triangleq \argmin_{Q_{U \mid Y}:
  \Info(Q_{U; Y}) \leq R - \epsilon'} \max_{Q_{X
  \mid YU}} \RD{d}{D}(Q_{X \mid U}),
\end{align}
(provided the minimum exists) where the optimization is over choice of
the finite set $\set{U}$, and types $Q_{U \mid Y}$ and $Q_{X \mid YU}$
for which $Q_YQ_{U \mid Y}Q_{X \mid YU} \in \pmf_n(\set{Y} \times \set{U}
\times \set{X})$; where $\Info(Q_{U; Y})$ and $\RD{d}{D}(Q_{X \mid U})$
are computed w.r.t.~$Q_YQ_{U \mid Y}Q_{X \mid YU}$; and where $\epsilon'$
is a small positive constant (to be specified later). If the minimum in
\eqref{eq:optimal_cond} does not exist, we let
\begin{equation}
  R^*(Q_Y) \triangleq \inf_{Q_{U \mid Y}:
  \Info(Q_{U; Y}) \leq R - \epsilon'} \max_{Q_{X
  \mid YU}} \RD{d}{D}(Q_{X \mid U}),
\end{equation}
where the optimization is under the same conditions as in
\eqref{eq:optimal_cond}, and instead define $Q_{U \mid Y}^*(Q_Y)$ as a
conditional type satisfying
\begin{equation}\label{eq:r_star}
  \max_{Q_{X \mid YU}} \RD{d}{D}(Q_{X \mid U}) \leq R^*(Q_Y) + \epsilon''
\end{equation}
where the maximum is over all conditional types $Q_{X \mid YU}$ for
which
$Q_YQ_{U \mid Y}^*Q_{X \mid YU} \in \pmf_n(\set{Y} \times \set{U}
\times \set{X})$; where $\RD{d}{D}(Q_{X \mid U})$ is computed
w.r.t.~$Q_YQ_{U \mid Y}^*Q_{X \mid YU}$; and where $\epsilon''$ is a
small positive constant (also to be specified later).

To construct the helper $f_n$, we invoke Lemma \ref{lmm:tcl}
(assuming that $n$ is sufficiently large) setting $Q_V \leftarrow Q_Y$,
$Q_{W \mid V} \leftarrow Q_{U \mid Y}^*(Q_Y)$, and $\epsilon \leftarrow
\epsilon'$ to obtain a codebook $\set{C}(Q_Y) \subseteq \set{U}^n$ used
by $f_n$ to produce the index of some $U^n \in \set{C}(Q_Y)$ such that
$(U^n, Y^n) \in \tcl{n}(Q_YQ_{U \mid Y}^*(Q_Y))$.

We next construct a guessing strategy $\set{G}_n$. Let $U^n
\in \set{C}(Q_Y)$ be the codeword provided by the helper and that
hence satisfies $(Y^n, U^n) \in \tcl{n}(Q_YQ_{U \mid Y}^*(Q_Y))$.
Let $Q_{XYU}$ denote the empirical joint distribution of $(X^n, Y^n, U^n)$.
We first argue that the guesser can be assumed cognizant of $Q_{XYU}$.
To that end, we need the following lemma:

\begin{lemma}[{Interlaced-Guessing lemma \cite[Lemma 5]{BraHofLap19}}]\label{lmm:int_guessing}
Let $V$, $W$, and $Z$ be finite-valued chance variables and
let $\rho$ be nonnegative. Given any guessing strategy $\set{G}$ for
guessing $V$ based on $W$ and $Z$, there exists a guessing strategy
$\Tilde{\set{G}}$ based on $W$ only such that
\begin{equation}
  \E[\tilde{G}(V \mid W)^\rho] \leq \E[G(V \mid W, Z)^\rho]
  |\set{Z}|^\rho.
\end{equation}
\end{lemma}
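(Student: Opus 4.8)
The plan is to build $\tilde{\set{G}}$ from $\set{G}$ by a \emph{round-robin interlacing} of the $|\set{Z}|$ guessing orders that $\set{G}$ induces for the different values of $Z$, and then to control the position at which the correct guess surfaces in the merged order. The factor $|\set{Z}|^\rho$ will emerge because each stage of the merged list is at most $|\set{Z}|$ times as long as the corresponding stage of any single order.

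First I would fix $w \in \set{W}$ and, for each $z \in \set{Z}$, look at the order in which $\set{G}(\,\cdot\mid w, z)$ guesses the elements of $\set{V}$. I would then define $\tilde{\set{G}}(\,\cdot \mid w)$ to proceed in rounds: round $j$ lists the $j$-th guesses $\set{G}(j \mid w, z)$ as $z$ ranges over $\set{Z}$ in some fixed order, and any element already guessed at an earlier position is skipped so that $\tilde{\set{G}}(\,\cdot\mid w)$ stays injective, hence bijective. Crucially, this construction reads $w$ alone and never the realized value of $Z$, as the lemma demands.

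Next I would bound the position. Suppose $V = v$, $W = w$, $Z = z^\ast$, and $G(v \mid w, z^\ast) = k$, so that $v = \set{G}(k \mid w, z^\ast)$. Before any skipping, $v$ sits in round $k$, which occupies the positions $(k-1)|\set{Z}| + 1, \ldots, k|\set{Z}|$; hence its first occurrence in the merged list is at position at most $k|\set{Z}|$. Skipping previously guessed elements can only move $v$ to an earlier position, so
\begin{equation*}
  \tilde{G}(v \mid w) \leq |\set{Z}| \cdot G(v \mid w, z^\ast).
\end{equation*}
Since $\rho \geq 0$, raising to the power $\rho$ preserves this pointwise inequality, and taking the expectation over the joint law of $(V, W, Z)$ yields the claimed bound.

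The step needing the most care is verifying that the skipping of duplicates simultaneously keeps $\tilde{\set{G}}(\,\cdot\mid w)$ a valid bijective guessing order and never pushes the first occurrence of $v$ past position $k|\set{Z}|$; once this monotonicity is in hand, the remaining reduction to a pointwise estimate and the passage to expectations are routine.
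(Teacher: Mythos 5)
Your argument is correct, and it is the standard round-robin interlacing proof of this lemma; the paper itself does not prove the statement but imports it from \cite{BraHofLap19}, where essentially this same construction (merge the $|\set{Z}|$ orders round by round, skip duplicates, and bound the position of $v$ by $|\set{Z}|\,G(v \mid w, z^\ast)$ before taking $\rho$-th powers and expectations) is used. Nothing is missing: the deduplicated list is a bijection because every element of $\set{V}$ appears in each constituent order, and skipping earlier entries can only decrease the position of the first occurrence of $v$, so the pointwise bound and hence the moment bound follow.
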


Invoking Lemma \ref{lmm:int_guessing} with $V \leftarrow X^n$, $W
\leftarrow U^n$, and $Z \leftarrow Q_{XYU}$, we see that
\begin{equation}
  \min_{\set{G}_n} \E[G(X^n \mid U^n)^\rho] \leq \min_{\set{G}_n} \E[G(X^n \mid U^n,
  Q_{XYU})^\rho] \big|\!\typ{n}(\set{X} \times \set{Y}
  \times \set{U})\big|^\rho\!\!\!,\label{eq:int_guessing_ub}
\end{equation}
where the guessing strategy on the RHS of \eqref{eq:int_guessing_ub}
depends on both the helper's description $f_n(Y^n)$ of $Y^n$ and the
empirical joint distribution $Q_{XYU}$ of $(X^n, Y^n, U^n)$. Since
$\big|\!\typ{n}(\set{X} \times \set{Y} \times \set{U})\big|$ grows subexponentially
with $n$,
\begin{equation}\label{eq:poly_ub}
  \lim_{n \to \infty} \frac{1}{n}\log\big|\!\typ{n}(\set{X}
  \times \set{Y} \times \set{U})\big|^\rho = 0.
\end{equation}
Thus, by
\eqref{eq:int_guessing_ub} and \eqref{eq:poly_ub},
\begin{equation}
  \limsup_{n \to \infty} \frac{1}{n}\min_{\set{G}_n} \E[G(X^n
  \mid U^n)^\rho] \leq \limsup_{n \to \infty} \frac{1}{n}\min_{\set{G}_n} \E[G(X^n
  \mid U^n, Q_{XYU})^\rho].\label{eq:type_revealed}
\end{equation}
Since the RHS of \eqref{eq:type_revealed} cannot exceed its LHS,
\eqref{eq:type_revealed} must hold with equality, and we shall hence
for the remainder of the proof assume that $Q_{XYU}$ is known to the
guesser.

Our guessing strategy $\set{G}_n$ will thus depend
on both the helper's description $U^n$ of $Y^n$ and the empirical joint
distribution $Q_{XYU}$ of $(X^n, Y^n, U^n)$. To construct $\set{G}_n$, we
will use of the following corollary \cite[Lemma 2]{GraLap18} which follows
from the conditional version of Lemma \ref{lmm:tcl}:

\begin{corollary}\label{corr:cond_rd_tcl}
Let $\set{V}$, $\set{W}$ and $\set{Z}$ be finite
sets, let $\dist(\cdot, \cdot)$ be a distortion function on $\set{V}
\times \set{W}$, let $\bar{\dist}(\cdot, \cdot)$ be its extension to
sequences, and let $D$ be positive. For
every $\epsilon > 0$ there exists some $n_0(\epsilon)$ such that
for all $n$ exceeding $n_0(\epsilon)$ the following holds: For every
$Q_{VZ} \in \pmf_n(\set{V} \times \set{Z})$ and every $z^n
\in \tcl{n}(Q_Z)$ there exists a codebook $\set{C} \subseteq
\set{W}^n$ that satisfies
\begin{equation}
  |\set{C}| \leq 2^{n(\RD{d}{D}(Q_{V \mid Z})
  + \epsilon)}\label{eq:rd_cond_tcl_prop_1}
\end{equation}
and
\begin{equation}
  \forall v^n \in \tcl{n}(Q_{V \mid Z} | z^n) \, \exists w^n
  \in \set{C}\colon \bar{\dist}(v^n, w^n)
  \leq D.\label{eq:rd_cond_tcl_prop_2}
\end{equation}
\end{corollary}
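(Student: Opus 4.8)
The plan is to deduce the corollary from Lemma~\ref{lmm:tcl} in two moves: first upgrade the unconditional type-covering lemma to a conditional one by decomposing $z^n$ into constant-$z$ sub-blocks, and then convert the resulting joint-type covering into the desired distortion covering by choosing the covering channel to achieve the conditional rate--distortion function $\RD{d}{D}(Q_{V \mid Z})$.

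First I would establish the conditional type-covering statement: for every $z^n \in \tcl{n}(Q_Z)$ and every conditional type $Q_{W \mid VZ}$ with $Q_{VZ}Q_{W \mid VZ} \in \pmf_n(\set{V} \times \set{Z} \times \set{W})$, there is a codebook $\set{C} \subseteq \set{W}^n$ of size at most $2^{n(\Info(Q_{V; W \mid Z}) + \epsilon')}$ such that every $v^n \in \tcl{n}(Q_{V \mid Z} \mid z^n)$ is matched by some $w^n \in \set{C}$ for which $(v^n, z^n, w^n)$ has joint type $Q_{VZ}Q_{W \mid VZ}$. To prove this I would partition $\{1, \ldots, n\}$ according to the value of $z_i$; for each $z \in \set{Z}$ the positions with $z_i = z$ number $n_z \triangleq n Q_Z(z)$ (an integer since $Q_Z \in \pmf_n$), and the restriction of $v^n$ to these positions ranges over $\tcl{n_z}(Q_{V \mid Z = z})$ as $v^n$ ranges over $\tcl{n}(Q_{V \mid Z} \mid z^n)$. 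For each $z$ whose block length $n_z$ exceeds the threshold $n_0(\epsilon')$ of Lemma~\ref{lmm:tcl}, I apply that lemma with block length $n_z$, type $Q_{V \mid Z = z}$, and conditional type $Q_{W \mid V, Z = z}$ to obtain a sub-codebook $\set{C}_z$ of size at most $2^{n_z(\Info(Q_{V; W \mid Z = z}) + \epsilon')}$ covering $\tcl{n_z}(Q_{V \mid Z = z})$ in joint type; for the finitely many $z$ with $n_z \le n_0(\epsilon')$ I take $\set{C}_z = \set{W}^{n_z}$, whose size is bounded by a constant independent of $n$. Interleaving the sub-codewords along the pattern of $z^n$ yields $\set{C}$ with $|\set{C}| = \prod_z |\set{C}_z|$, and since $\sum_z n_z \Info(Q_{V; W \mid Z = z}) = n \Info(Q_{V; W \mid Z})$, the product is at most $2^{n(\Info(Q_{V; W \mid Z}) + \epsilon')}$ times a sub-exponential factor.

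Second I would convert this into a distortion covering. Let $Q_{W \mid VZ}^\star$ attain the minimum in the definition~\eqref{eq:def_rd_cond} of $\RD{d}{D}(Q_{V \mid Z})$, so that $\E_{Q_{VZ}Q_{W \mid VZ}^\star}[\dist(V, W)] \le D$ while $\Info(Q_{V; W \mid Z}) = \RD{d}{D}(Q_{V \mid Z})$. I then pick a conditional type $Q_{W \mid VZ}$ close to $Q_{W \mid VZ}^\star$ and compatible with the denominator of $Q_{VZ}$, and invoke the conditional covering above. The crucial observation is that the joint type of the matched pair $(v^n, w^n)$ is exactly the $(V, W)$-marginal $Q_{VW}$ of $Q_{VZ}Q_{W \mid VZ}$, regardless of the sub-block sizes, so that $\bar{\dist}(v^n, w^n) = \E_{Q_{VW}}[\dist(V, W)]$ is governed directly by the chosen channel; combined with the information bound $\Info(Q_{V; W \mid Z}) \le \RD{d}{D}(Q_{V \mid Z}) + \epsilon/2$, this yields both \eqref{eq:rd_cond_tcl_prop_1} and \eqref{eq:rd_cond_tcl_prop_2} once $n$ is large enough to absorb the sub-exponential factor from the short blocks into the remaining $\epsilon/2$ of slack.

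The main obstacle will be the type-approximation step, carried out uniformly over all $Q_{VZ} \in \pmf_n$: the optimal channel $Q_{W \mid VZ}^\star$ is generally not a type, and rounding it to a conditional type can both nudge $\E[\dist(V, W)]$ above $D$ and raise $\Info(Q_{V; W \mid Z})$. I would handle this by first solving the conditional rate--distortion problem at the slightly reduced level $D - \eta$ and exploiting the uniform continuity of $\RD{d}{D}(Q_{V \mid Z})$ in both $D$ and $Q_{V \mid Z}$, together with the continuity of the map $Q_{W \mid VZ} \mapsto (\E[\dist], \Info)$ on the compact probability simplex, so that for $n$ large every nearby type simultaneously satisfies $\E[\dist(V, W)] \le D$ and the required information bound uniformly over the types. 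The short blocks, being at most $|\set{Z}|$ in number, affect only the rate and contribute a factor that is constant in $n$, hence negligible for $n$ sufficiently large.
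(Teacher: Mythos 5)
The paper offers no proof of this corollary---it simply cites it as Lemma~2 of \cite{GraLap18} and notes that it ``follows from the conditional version of Lemma~\ref{lmm:tcl}''---and your argument is precisely that standard route: decompose $z^n$ into constant-$z$ sub-blocks to lift the unconditional type-covering lemma to a conditional one, then instantiate the covering channel with a conditional type approximating the minimizer in \eqref{eq:def_rd_cond} so that the matched pairs have the prescribed $(V,W)$-joint type and hence empirical distortion at most $D$ at rate $\RD{d}{D}(Q_{V\mid Z})+\epsilon$. Your proposal is correct and matches the approach the paper points to; the only point deserving a little extra care is the rounding of the optimal test channel to a type (your $D-\eta$ device needs a fallback when $D$ already equals the minimal achievable distortion for some type $Q_{V\mid Z}$), which is handled in the standard references.
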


We invoke Corollary \ref{corr:cond_rd_tcl} with $Q_{VZ} \leftarrow Q_{XU}$,
$z^n \leftarrow U^n$, $\set{W} \leftarrow \hat{\set{X}}$, and $\epsilon
\leftarrow \epsilon''$, where $\epsilon''$ is some small nonnegative
constant (to be specified later) to obtain the codebook $\set{C}(Q_{XYU})
\subseteq \hat{\set{X}}^n$. The guessing strategy $\set{G}_n$ is then
chosen such that $\set{G}_n|_{\{1, \ldots, |{\set{C}(Q_{XYU})}|\}}$
is a bijection from $\{1, \ldots, |\set{C}(Q_{XYU})|\}$ to
$\set{C}(Q_{XYU})$, i.e., such the first $|\set{C}(Q_{XYU})|$ guesses are
those in $\set{C}(Q_{XYU})$ in some arbitrary order. Note that
\eqref{eq:rd_cond_tcl_prop_2} guarantees that some $\hat{X}^n$ in~$\set{G}_n|_{\{1, \ldots, |{\set{C}(Q_{XYU})}|\}}$ satisfies
\eqref{eq:thres_to_sat}, and thus the guesser succeeds after at
most $|\set{C}(Q_{XYU})|$ guesses.

We now show that \eqref{eq:target_ub} holds for our proposed helper $f_n$
and guessing strategy $\set{G}_n$:
\begin{align}
  &\E[G_n(X^n \mid f_n(Y^n))^\rho]\nonumber\\
  &\stackrel{(a)}{=} \E[G_n(X^n \mid U^n)^\rho]\\
  &\stackrel{(b)}{=} \sum_{Q_Y} \sum_{Q_{X \mid YU}} \Big(\!
  \Prb[Y^n \in \tcl{n}(Q_Y)] \Prb[X^n \in \tcl{n}(Q_{X \mid YU}) \mid Y^n
  \in \tcl{n}(Q_Y)]\nonumber\\
  &\quad\quad\E[G_n(X^n \mid U^n)^\rho \mid (X^n, Y^n, U^n)
  \in \tcl{n}(Q_YQ_{U \mid Y}^*(Q_Y)Q_{X \mid YU})]\Big)\\
  &\stackrel{(c)}{\leq} \sum_{Q_Y} \sum_{Q_{X \mid YU}}
  \Prb[Y^n \in \tcl{n}(Q_Y)] \Prb[X^n \in \tcl{n}(Q_{X \mid YU}) \mid Y^n
  \in \tcl{n}(Q_Y)] 2^{n\rho(\RD{d}{D}(Q_{X \mid U})) + \epsilon'')}\\
  &\stackrel{(d)}{\leq} \sum_{Q_Y} \sum_{Q_{X \mid YU}} 2^{-n\D(Q_Y \| P_Y)}2^{-n\D(Q_{X \mid YU} \| P_{X
  \mid Y})} 2^{n\rho(\RD{d}{D}(Q_{X \mid U}) + \epsilon'')}\\
  &\stackrel{(e)}{\leq} \max_{Q_Y} \max_{Q_{X \mid YU}}
  2^{-n\D(Q_Y \| P_Y)}2^{-n\D(Q_{X \mid YU} \| P_{X
  \mid Y})} 2^{n\rho(\RD{d}{D}(Q_{X \mid U}) + \epsilon'')} \big|\!\typ{n}(\set{X} \times \set{Y}
  \times \set{U})\big|^\rho\\
  &\stackrel{(f)}{=} \max_{Q_Y} \min_{Q_{U \mid Y}: \Info(Q_{U; Y})
  \leq R - \epsilon'} \max_{Q_{X \mid YU}}
  2^{-n\D(Q_Y \| P_Y)} 2^{-n\D(Q_{X \mid YU} \| P_{X \mid Y})}2^{n\rho(\RD{d}{D}(Q_{X
  \mid U}) + \epsilon'')} \big|\!\typ{n}(\set{X} \times \set{Y}
  \times \set{U})\big|^\rho\\
  &\stackrel{(g)}{\leq} \sup_{Q_Y} \inf_{Q_{U \mid Y}:
  \Info(Q_{U; Y}) \leq R - \epsilon'} \sup_{Q_{X \mid YU}}
  2^{-n\D(Q_Y \| P_Y)}2^{-n\D(Q_{X \mid YU} \| P_{X \mid Y})}2^{n\rho(\RD{d}{D}(Q_{X
  \mid U}) + \epsilon'')}2^{n\delta_n} \big|\!\typ{n}(\set{X} \times \set{Y}
  \times \set{U})\big|^\rho\\
  &\stackrel{(h)}{\leq} \sup_{Q_Y} \inf_{Q_{U \mid Y}:
  \Info(Q_{U; Y}) \leq R} \sup_{Q_{X \mid YU}}
  2^{-n\D(Q_Y \| P_Y)} 2^{-n\D(Q_{X \mid YU} \| P_{X \mid Y})}2^{n\rho\RD{d}{D}(Q_{X
  \mid U})}2^{n\delta_n}2^{n\epsilon}
  \big|\!\typ{n}(\set{X} \times \set{Y}
  \times \set{U})\big|^\rho\label{eq:final_step},
\end{align}
where (a) holds because we have assumed that the empirical distribution
$Q_Y$ of $Y^n$ is known to the guesser who can thus recover $U^n$ from
$f_n(Y^n)$ and $\set{C}(Q_Y)$; in (b) we have used the law of total
expectation, averaging over the types $Q_Y \in \pmf_n(\set{Y})$ and
conditional types $Q_{X \mid YU}$ for which $Q_YQ_{U \mid Y}Q_{X
\mid YU} \in \pmf_n(\set{Y} \times \set{U} \times \set{X})$ (recall
that $Q_{U \mid Y} = Q_{U \mid Y}^*(Q_Y)$ is fixed by $f_n$); (c) is due
to \eqref{eq:rd_cond_tcl_prop_1}; (d) follows from \cite[Theorem 11.1.4]{CovTho06};
in (e) we have upper-bounded the sum by the largest term times the
number of terms (the number of terms is the number of types~$Q_Y$ and
$Q_{X \mid YU}$ that we have in turn upper-bounded by the number of
types $Q_{XYU}$);
(f) is due to \eqref{eq:optimal_cond}; in (g) we have lifted the
constraint on $Q_Y$, $Q_{U \mid Y}^*(Q_Y)$, and $Q_{X \mid YU}$ to be
types at a cost of at most $2^{n\delta_n}$, where $\delta_n
\downarrow 0$ as $n \to \infty$, and where the step is justified
because any PMF can be approximated
arbitrarily well by a type of sufficiently large denominator; and in (h)
we have used the fact that all exponents are continuous functions of their
respective arguments, and that $\epsilon'$ and $\epsilon''$
were chosen sufficiently small.

Dividing the $\log$ of \eqref{eq:final_step} by $n$, taking
the $\limsup$ as $n$ tends to infinity, and applying \eqref{eq:poly_ub}
yields \eqref{eq:target_ub}.
\end{proof}

\section{Converse}

In this section we prove the converse part of Theorem \ref{thm:main},
namely, that when $\{(X_i, Y_i)\}_{i = 1}^n$ are IID according to
$P_{XY}$, then for any sequence of rate-$R$ helpers $\{f_n\}$ and
guessing strategies $\{\set{G}_n\}$,
\begin{align}
  &\liminf_{n \to \infty} \frac{1}{n}\log(\E[G_n(X^n
  \mid f_n(Y^n))^\rho])\nonumber\\
  &\quad\quad \geq \sup_{Q_Y} \inf_{Q_{U \mid Y}: \Info(Q_{Y;U}) \leq R}
  \sup_{Q_{X \mid YU}} \Big(\rho\RD{d}{D}(Q_{X \mid U}) - \D(Q_{XYU} \| P_{XY}Q_{U
  \mid Y})\Big).\label{eq:target_lb}
\end{align}

\begin{proof}
Fix a sequence of helpers $\{f_n$\} and guessing strategies
$\{\set{G}_n\}$. We begin by observing that for any probability
law $Q$ of $(X^n, Y^n)$-marginal $Q_{X^nY^n}$,
\begin{equation}
  \E_{P_{X^nY^n}}[G_n(X^n \mid f_n(Y^n))^\rho] \geq 2^{\rho\E_Q[\log(G_n(X^n \mid f_n(Y^n)))]
  - \D(Q_{X^nY^n} \| P_{X^nY^n})}\label{eq:exp_lb},
\end{equation}
where $\E_P$ denotes expectation w.r.t.\ the PMF $P$. Indeed,
\begin{align}
  \E_{P_{X^nY^n}}[G_n(X^n \mid f_n(Y^n))^\rho] &= \sum_{(x^n, y^n) \in \set{X}^n \times \set{Y}^n} P_{X^n, Y^n}(x^n, y^n)\, G_n(x^n
  \mid f_n(y^n))^\rho\\
  &= \sum_{(x^n, y^n) \in \set{X}^n \times \set{Y}^n} Q_{X^n, Y^n}(x^n, y^n)\, G_n(x^n
  \mid f_n(y^n))^\rho\frac{P_{X^n, Y^n}(x^n, y^n)}
  {Q_{X^n, Y^n}(x^n, y^n)}\\
  &= \sum_{(x^n, y^n) \in \set{X}^n \times \set{Y}^n} Q_{X^n, Y^n}(x^n, y^n)\,
  2^{\log\left(G_n(x^n \mid f_n(y^n))^\rho
  \frac{P_{X^n, Y^n}(x^n, y^n)}{Q_{X^n, Y^n}(x^n, y^n)}\right)}\\
  &\stackrel{(a)}{\geq} 2^{\sum_{x^n, y^n} Q_{X^n, Y^n}\log\left(
  G_n(x^n \mid f_n(y^n))^\rho\frac{P_{X^n, Y^n}(x^n, y^n)}
  {Q_{X^n, Y^n}(x^n, y^n)}\right)}\\
  &= 2^{\rho\E_Q[\log(G_n(X^n \mid f_n(Y^n)))]
  - \D(Q_{X^nY^n} \| P_{X^nY^n})},
\end{align}
where (a) follows from Jensen's inequality.

To describe the law $Q$ to which we shall apply \eqref{eq:exp_lb},
let $[1:n]$ denote the set $\{1,\ldots, n\}$ and define the auxiliary
variables
\begin{align}
  M   &\triangleq f_n(Y^n)\\
  U_i &\triangleq (X^{i - 1}, Y^{i - 1}, M), \quad i \in [1:n]
  \label{eq:u_i}
\end{align}
taking values in the sets
\begin{equation}
  \set{M} \triangleq \{0, 1\}^{nR}
\end{equation}
and
\begin{equation}
  \label{eq:defSetUi}
  \set{U}_i \triangleq \set{X}^{i - 1} \times \set{Y}^{i - 1}
                       \times \set{M},
  \quad i \in [1:n].
\end{equation}
Given any $Q_Y \in \pmf(\set{Y})$ and any $n$ Markov kernels $\{Q_{X_i
\mid Y_iU_i}\}_{i = 1}^n$, define the law
\begin{equation}
  Q_{X^nY^nMU^n\hat{X}^n} \textnormal{ on } {\set{Y}^n \times \set{X}^n \times \set{M} \times \prod_{i = 1}^n
\set{U}_i \times \hat{\set{X}}^n}
\end{equation}
as
\begin{subequations}
  \begin{align}
    &Q_{X^nY^nMU^n\hat{X}^n} \triangleq Q_Y^{\times n}P_{M \mid Y^n}\prod_{i
    = 1}^n \big(Q_{U_i \mid X^{i - 1}Y^{i - 1}M}Q_{X_i \mid
    Y_iU_i}\big)P_{\hat{X}^n \mid MX^n},\label{eq:q_full}
  \end{align}
  where $P_{M \mid Y^n}$ is specified by the helper as
  \begin{equation}
    P_{M \mid Y^n}(m \mid y^{n}) = \I(m = f_n(y^n)),
  \end{equation}
  $Q_{U_i \mid X^{i - 1}Y^{i - 1}M}$ is specified
  through the definition of $U_{i}$ in~\eqref{eq:u_i} as
  \begin{equation}
    Q_{U_i \mid X^{i - 1}Y^{i - 1}M}(u_{i} \mid x^{i - 1}y^{i - 1}m) = \I(u_i = (x^{i - 1}, y^{i - 1}, m)),
  \end{equation}
  and 
  $P_{\hat{X}^n \mid MX^n}$ is determined by the guessing strategy as
  \begin{equation}
   P_{\hat{X}^n \mid MX^n}(\hat{x}^{n} \mid m, x^{n}) =  \I(\hat{x}^n = \set{G}_n(G_n(x^n \mid m))).
 \end{equation}
 Thus,
  \begin{align}
    &Q_{X^nY^nMU^n\hat{X}^n}(y^n, x^n, m, u^n, \hat{x}^n)\nonumber\\
    &\quad\quad= Q_Y^{\times n}(y^n)\I(m = f_n(y^n))
    \prod_{i = 1}^n \big(\!\I(u_i = (x^{i - 1}, y^{i - 1}, m))Q_{X_i \mid
    Y_iU_i}(x_i \mid y_i, u_i)\big)\I(\hat{x}^n = \set{G}_n(G_n(x^n \mid m))),
  \end{align}
\end{subequations}
where $G_n(\cdot)$ is defined in \eqref{eq:def_guessing_fn}.

Note that~\eqref{eq:q_full} implies that
\begin{equation}
  \label{eq:amos_clar700Markov}
  X^{i-1} \to  (M,Y^{i-1}) \to Y_{i} \quad \text{under $Q$}
\end{equation}
because the $(X^{i-1},Y^{n},M,U^{i-1})$-marginal of $Q$ can be written
as
\begin{equation}
  Q_{X^{i-1}Y^{n}MU^{i-1}} = Q_Y^{\times n} P_{M \mid Y^n} \prod_{j
    = 1}^{i-1} \big(Q_{U_j \mid X^{j - 1}Y^{j - 1}M}Q_{X_j \mid
    Y_jU_j}\big),
\end{equation}
which impies that
\begin{equation}
  \label{eq:amos_clar710Markov}
  (X^{i-1},U^{i-1}) \to  (M,Y^{i-1}) \to Y_{i}^{n} \quad \text{under $Q$}
\end{equation}
because the product is a function of $(m,y^{i-1})$ and
$(x^{i-1},u^{i-1})$, and the pre-product
$Q_Y^{\times n}(y^n) P_{M \mid Y^n}$ is a function of $(m,y^{i-1})$
and $y_{i}^{n}$.

Next define for every $i \in [1:n]$
\begin{equation}\label{eq:dist_i}
  D_i \triangleq \E[d(X_i, \hat{X}_i)],
\end{equation}
where the expectation is w.r.t.\ to the PMF
  $Q_{X^nY^nMU^n\hat{X}^n}$. Under the latter, $\hat{x}^n =
  \set{G}_n(G_n(x^n \mid m))$ and therefore $\bar{\dist}(x^n,
  \hat{x}^n) \leq D$ for \emph{every} $x^{n} \in \set{X}^{n}$ and,
  also in expectation (over $Q_{X^nY^nMU^n\hat{X}^n}$)
  \begin{equation}
    \label{eq:amos_clar200}
    \frac{1}{n} \sum_{i=1}^{n} D_{i} \leq D.
  \end{equation}

Further define 
%with $Q_{\hat{X}^n \mid MX^n} = \I(\hat{X}^n = \set{G}_n(G_n(X^n \mid M)))$
%nd with $Q_{MX^n}$ being the $(M, X^n)$-marginal of $Q$.
%Next we define
\begin{equation}\label{eq:min_q_i}
  Q_{\hat{X}_i' \mid MX^i}^* \triangleq \argmin_{\substack{Q_{\hat{X}'_i
  \mid MX^i}:\\ \E[d(X_i, \hat{X}'_i)] \leq D_i}} \Info(X_i; \hat{X}'_i
  \mid M, X^{i - 1}),
\end{equation}
where the minimum is over all conditional PMFs
$Q_{\hat{X}'_i \mid MX^i} \in \pmf(\hat{\set{X}} \mid \set{M} \times
\set{X}^i)$, and where $\Info(X_i; \hat{X}'_i \mid M, X^{i - 1})$ and
$\E[d(X_i, \hat{X}'_i)]$ are evaluated
w.r.t.~$Q_{\hat{X}'_i \mid MX^i} Q_{MX^i}$, with $Q_{MX^i}$ being the
$(M, X^i)$-marginal of $Q_{X^nY^nMU^n\hat{X}^n}$. Using
$\{Q_{\hat{X}_i' \mid MX^i}^*\}_{i = 1}^n$, we extend
$Q_{X^nY^nMU^n\hat{X}^n}$ to a law $Q$ on
$\set{Y}^n \times \set{X}^n \times \set{M} \times \prod_{i = 1}^n
\set{U}_i \times \hat{\set{X}}^n \times \hat{\set{X}}^n$ as follows:
\begin{equation}\label{eq:total_q}
  Q \triangleq Q_{X^nY^nMU^n\hat{X}^n}\prod_{i = 1}^n Q_{\hat{X}'_i \mid
  MX^i}^*.
\end{equation}
Note that the factorization in~\eqref{eq:total_q} implies that
  \begin{equation}
    \label{eq:amos_clar_10}
    \hat{X}'_i \to (M, X^i) \to Y^{i - 1}
  \end{equation}
  because it implies that---conditional on $(M,X^i)$---$\hat{X}'_i$ is
  independent of the tuple $(X^n,Y^n,M,U^n,\hat{X}^n)$ and hence also of $Y^{i-1}$
  (which is a function of this tuple). 
%
%
%that a fortiori, under $Q$, we have the Markov chain
%$\hat{X}'_i \to (M, X^i) \to Y^{i - 1}$.
For the remainder of this section we shall assume that, unless stated
otherwise, all expectations and information-theoretic quantities are
evaluated w.r.t.~$Q$. To study \eqref{eq:exp_lb} for this~$Q$, we
begin by lower-bounding $\E[\log(G_n(X^n \mid M))]$ using the
conditional R-D function. To this end, we note that, conditional on
$M=m$, there is a one-to-one correspondence between $G_n(X^n \mid M)$ and
$\hat{X}^{n}$ so, by the Reverse Wyner inequality of Corollary~\ref{cor:entropy_lb}
in Appendix \ref{app:a},
\begin{equation}
  \E[\log(G_n(X^n \mid M)) \mid M=m] \geq \Entr(\hat{X}^n \mid M = m) - n\delta_n\\
\end{equation}
with $\delta_n$ tending to zero as $n$ tends to infinity. Averaging
over $M$,
\begin{align}
  \E[\log(G_n(X^n \mid M))] &\geq \Entr(\hat{X}^n \mid M) - n\delta_n\\
  &\geq \Info(\hat{X}^n; X^n \mid M) - n\delta_n\\
  &= \sum_{i = 1}^n \Big(\!\Entr(X_i \mid M, X^{i - 1})
  - \Entr(X_i \mid M, \hat{X}^n, X^{i - 1})\Big) - n\delta_n\\
  &\geq \sum_{i = 1}^n \Big(\!\Entr(X_i \mid M, X^{i - 1})
  - \Entr(X_i \mid M, \hat{X}_i, X^{i - 1})\Big) - n\delta_n\\
  &= \sum_{i = 1}^n \Info(X_i; \hat{X}_i \mid M, X^{i - 1})
  - n\delta_n\\
  &\stackrel{(a)}{\geq} \sum_{i = 1}^n \Info(X_i; \hat{X}'_i
  \mid M, X^{i - 1}) - n\delta_n\\
  &= \sum_{i = 1}^n \Big(\!\Entr(\hat{X}'_i \mid M, X^{i - 1})
  - \Entr(\hat{X}'_i \mid M, X^i)\Big) - n\delta_n\\
  &\geq \sum_{i = 1}^n \Big(\!\Entr(\hat{X}'_i \mid M, X^{i - 1},
  Y^{i -1}) - \Entr(\hat{X}'_i \mid M, X^i)\Big) - n\delta_n\\
  &\stackrel{(b)}{=} \sum_{i = 1}^n \Big(\!\Entr(\hat{X}'_i
  \mid M, X^{i - 1}, Y^{i -1}) - \Entr(\hat{X}'_i
  \mid M, X^i, Y^{i - 1})\Big) - n\delta_n\\
  &\stackrel{(c)}{=} \sum_{i = 1}^n \Big(\!\Entr(\hat{X}'_i \mid U_i)
  - \Entr(\hat{X}'_i \mid U_i, X_i)\Big) - n\delta_n\\
  &= \sum_{i = 1}^n \Info(X_i; \hat{X}'_i \mid U_i) -
  n\delta_n,\label{eq:anchor}
\end{align}
% TODO: Make sure to define \delta_n \to 0 notation in problem st.
where in (a) we have replaced $\hat{X}_i$ by $\hat{X}'_i$, and
the inequality hence follows from \eqref{eq:min_q_i}; (b) follows
  from~\eqref{eq:amos_clar_10}; 
%holds because
%\eqref{eq:total_q} implies that under $Q$, $\hat{X}'_i \to (M, X^i) \to
%Y^{i - 1}$;
and in (c) we have identified the auxiliary variable $U_i$ defined in
\eqref{eq:u_i}. To continue from \eqref{eq:anchor}, let $T$ be
equiprobable over $[1:n]$, independent of
$(Y^n, M, X^n, U^n, (\hat{X}')^n)$, and define the chance variable
\begin{equation}
  \label{eq:amos_clar_500}
  (Y, X, U, \hat{X}') \triangleq (Y_T, X_T, U_T, \hat{X}'_T)
\end{equation}
taking values in the set
$\set{Y} \times \set{X} \times (\cup_{i = 1}^n \set{U}_i) \times
\hat{\set{X}}$. Note that, since the sets $\{\set{U}_{i}\}$
of~\eqref{eq:defSetUi} are disjoint, $T$ is a deterministic function
of $U$, and we can define $\iota(\cdot)$ as mapping each
$u \in \cup_{i = 1}^n \set{U}_i$ to the unique $i \in [1:n]$ for which
$u \in \set{U}_i$. With this definition, the PMF of $(Y, X, U,
\hat{X}')$ can be expressed as
%By definition, $(Y, X, U, \hat{X}')$ is drawn according to
\begin{equation}\label{eq:q_mix}
  \tilde{Q}_{YXU\hat{X}'}(y, x, u, \hat{x}') \triangleq \frac{1}{n}Q_{Y_{\iota(u)}X_{\iota(u)}
  U_{\iota(u)}\hat{X}'_{\iota(u)}}(y, x, u, \hat{x}'),
\end{equation}
%where, for each $u \in \cup_{i = 1}^n \set{U}_i$, we define $i(u)$ as the unique
%$i \in [1:n]$ for which $u \in \set{U}_i$ (the sets $\{\set{U}_{i}\}$
%of~\eqref{eq:defSetUi} are disjoint) , and where
where $Q_{Y_iX_iU_i\hat{X}'_i}$ is the
$(Y_i, X_i, U_i, \hat{X}'_i)$-marginal of $Q$. We next observe that,
under $\tilde{Q}$, $\E[\dist(X, \hat{X}')]$ is upper-bounded by
$D$. Indeed,
\begin{align}
  \E_{\tilde{Q}}[\dist(X, \hat{X}')] &= \frac{1}{n}\sum_{i = 1}^n
  \E_Q[\dist(X_i, \hat{X}'_i)]\\
  &\leq \frac{1}{n}\sum_{i = 1}^n D_i\\
  &\leq D,\label{eq:dist_upper_bound_q_tilde}
\end{align}
where the first inequality follows from the constraint in the optimization
on the RHS of \eqref{eq:min_q_i} and the second from~\eqref{eq:amos_clar200}.
% \eqref{eq:thres_to_sat}.
Also note that, since $T$ is a deterministic function of $U$, the RHS
of~\eqref{eq:anchor} can be expressed in terms of~$(Y, X, U, \hat{X}')$ as
\begin{equation}
  n\Info(X; \hat{X}' \mid U) - n\delta_n,
\end{equation}
so,
\begin{equation}\label{eq:tmp_inf_lb}
  \E_{Q}[\log(G_n(X^n \mid M))] \geq n\Info(X; \hat{X}' \mid U) - n\delta_n,
\end{equation}
where
% the expectation on the LHS is w.r.t.\ to $Q$, and
the conditional mutual information on the RHS is
w.r.t.~$\tilde{Q}$. Using~\eqref{eq:dist_upper_bound_q_tilde}, we can
lower-bound the RHS of
\eqref{eq:tmp_inf_lb} in terms of the conditional R-D
function~\eqref{eq:def_rd_cond},
%we lower-bound the RHS of \eqref{eq:tmp_inf_lb} as follows:
\begin{equation}\label{eq:rd_lb}
  n\Info(X; \hat{X}' \mid U) - n\delta_n \geq
  n\RD{d}{D}(\tilde{Q}_{X \mid U}) - n\delta_n,
\end{equation}
and, using~\eqref{eq:rd_lb} and \eqref{eq:tmp_inf_lb}, we obtain the
desired lower bound %on $\E_{Q}[\log(G_n(X^n \mid M))]$:
\begin{equation}\label{eq:rd_lb_explicit}
  \E_{Q}[\log(G_n(X^n \mid M))] \geq n\RD{d}{D}(\tilde{Q}_{X \mid U}) - n\delta_n.
\end{equation}
We next return to \eqref{eq:exp_lb} and derive a single-letter expression
for $\D(Q_{X^nY^n} \| P_{X^nY^n})$, where $Q_{X^nY^n}$ is the
$(X^n, Y^n)$-marginal of $Q$, and
\begin{equation}
  \label{eq:amos_clar_300}
 P_{X^nY^n} = P_{XY}^{\times n}. 
\end{equation}
We first express it as
\begin{align}
  \D(Q_{X^nY^n} \| P_{X^nY^n})
  &= \D(Q_{X^nY^n}P_{M \mid Y^n}Q_{U^n \mid X^nY^nM} \| P_{X^nY^n}P_{M \mid
  Y^n}Q_{U^n \mid X^nY^nM}),\label{eq:to_find_d}
\end{align}
and then observe that $Q_{X^nY^n}P_{M \mid Y^n}Q_{U^n \mid X^nY^nM}$ is (a factorization of)
the $(X^n, Y^n, M, U^n)$-marginal of $Q$, which can be
expressed as
\begin{equation}\label{eq:easier_factorization}
  Q_{X^nY^n}P_{M \mid Y^n}Q_{U^n \mid X^nY^nM} = Q_Y^{\times n} \Biggl(
  \prod_{i =
  1}^n Q_{X_i \mid Y_iU_i} \Biggr) P_{M \mid Y^n}Q_{U^n \mid X^nY^nM},
\end{equation}
because, by \eqref{eq:q_full} (or~\eqref{eq:u_i}),
\begin{equation}
  Q_{U^n \mid X^nY^nM} = \prod_{i = 1}^n Q_{U_i \mid X^{i - 1}Y^{i - 1}M}.
\end{equation}

From~\eqref{eq:amos_clar_300}, \eqref{eq:easier_factorization},
and~\eqref{eq:to_find_d}
%substituting $P_{X^nY^n}$ with $P_{XY}^{\times n}$, observe that
\begin{align}
  \D(Q_{X^nY^n} \| P_{X^nY^n})
  &= \D(Q_{X^nY^n}P_{M \mid Y^n}Q_{U^n \mid X^nY^nM} \| P_{X^nY^n}P_{M \mid
  Y^n}Q_{U^n \mid X^nY^nM})\\
  &= \D\left(Q_Y^{\times n} \prod_{i = 1}^n Q_{X_i \mid Y_iU_i}
     P_{M \mid
  Y^n}Q_{U^n \mid X^nY^nM} \Bigg{\|} P_{XY}^{\times n}P_{M \mid Y^n}Q_{U^n \mid
  X^nY^nM}\right).\label{eq:to_single_letter}
\end{align}
We now continue the derivation of a single-letter expression for
$\D(Q_{X^nY^n} \| P_{X^nY^n})$ by studying the RHS of
\eqref{eq:to_single_letter}:
\begin{align}
  &\D(Q_{X^nY^n} \| P_{X^nY^n})\nonumber\\
  &= \D\left(Q_Y^{\times n}\prod_{i = 1}^n Q_{X_i \mid Y_iU_i} P_{M \mid
  Y^n}Q_{U^n \mid X^nY^nM} \Bigg{\|} P_{XY}^{\times n}P_{M \mid Y^n}Q_{U^n
  \mid X^nY^nM}\right)\\
  &\stackrel{(a)}{=} \E_Q\left[\log\left(\frac{Q_Y^{\times n}(Y^n)\prod_{i =
  1}^n Q_{X_i \mid Y_iU_i}(X_i \mid Y_i, U_i) P_{M \mid Y^n}(M \mid
  Y^n)Q_{U^n \mid X^nY^nM}(U^n \mid X^n, Y^n, M)}{P_{XY}^{\times n}(X^n,
  Y^n)P_{M \mid Y^n}(M \mid Y^n)Q_{U^n \mid X^nY^nM}(U^n \mid X^n, Y^n,
  M)}\right)\right]\\
  &= \E_Q\left[\log\left(\frac{Q_Y^{\times n}(Y^n)\prod_{i = 1}^n Q_{X_i \mid
  Y_iU_i}(X_i \mid Y_i, U_i)}{P_{XY}^{\times n}(X^n, Y^n)}\right)\right]\\
  &\stackrel{(b)}{=} \sum_{i =
  1}^n\E_{Q_{X_iY_iU_i}}\left[\log\left(\frac{Q_Y(Y_i)Q_{X_i \mid
  Y_iU_i}(X_i \mid Y_i, U_i)}{P_{XY}(X_i, Y_i)}\right)\right]\\
  &= \sum_{i = 1}^n \sum_{(x_i, y_i, u_i) \in \set{X} \times \set{Y} \times
  \set{U}_i} Q_{X_iY_iU_i}(x_i, y_i, u_i)\log\left(\frac{Q_Y(y_i)Q_{X_i \mid
  Y_iU_i}(x_i \mid y_i, u_i)}{P_{XY}(x_i, y_i)}\right)\\
  &\stackrel{(c)}{=} \sum_{i = 1}^n \sum_{(x_i, y_i, u_i) \in \set{X} \times
  \set{Y} \times \set{U}_i} Q_{X_iY_iU_i}(x_i, y_i,
  u_i)\log\left(\frac{Q_{Y_i}(y_i)Q_{X_i \mid Y_iU_i}(x_i \mid y_i,
  u_i)}{P_{XY}(x_i, y_i)}\right)\\
  &= n\sum_{i = 1}^n \sum_{(x_i, y_i, u_i) \in \set{X} \times \set{Y} \times
  \set{U}_i} \frac{1}{n}Q_{X_iY_iU_i}(x_i, y_i,
  u_i)\log\left(\frac{Q_{Y_i}(y_i)Q_{U_i \mid Y_i}(u_i \mid y_i)Q_{X_i \mid
  Y_iU_i}(x_i \mid y_i, u_i)\frac{1}{n}}{P_{XY}(x_i, y_i)Q_{U_i \mid
  Y_i}(u_i \mid y_i)\frac{1}{n}}\right)\\
  &\stackrel{(d)}{=} n\sum_{i = 1}^n \sum_{(x_i, y_i, u_i) \in \set{X} \times
  \set{Y} \times \set{U}_i} \tilde{Q}(x_i, y_i,
  u_i)\log\left(\frac{\tilde{Q}(x_i, y_i, u_i)}{P_{XY}(x_i, y_i)\tilde{Q}_{U
  \mid Y}(u_i \mid y_i)}\right)\\
  &= n\sum_{(x, y, u) \in \set{X} \times \set{Y} \times (\cup_{i = 1}^n
  \set{U}_i)} \tilde{Q}(x, y, u)\log\left(\frac{\tilde{Q}(x, y, u)}{P_{XY}(x,
  y)\tilde{Q}_{U \mid Y}(u \mid y)}\right)\\
  &=n\D(\tilde{Q}_{XYU} \| P_{XY}\tilde{Q}_{U \mid
  Y}),\label{eq:d_single_letter}
  % &+ \log\left(\frac{p(m \mid y^n)
  % \prod_{i = 1}^n q(u_i \mid x^{i - 1}, y^{i - 1}, m)}
  % {p(m \mid y^n) \prod_{i = 1}^n q(u_i
  % \mid x^{i - 1}, y^{i - 1}, m)}\right)\Bigg)
\end{align}
where (a) follows from the definition of the relative entropy and the fact
that $Q_Y^{\times n}\prod_{i = 1}^n Q_{X_i \mid Y_iU_i} P_{M \mid
Y^n}Q_{U^n \mid X^nY^nM}$ is (a factorization of) the
$(X^n, Y^n, M, U^n)$-marginal of $Q$; in (b) we have used that for
nonnegative $x$ and $y$, $\log(xy) = \log(x) + \log(y)$, and we used
$Q_{X_iY_iU_i}$ to denote the $(X_i, Y_i, U_i)$-marginal of $Q$;
 (c) holds because under~$Q$, $Y^n \sim \IID Q_Y$;
and in (d) we have identified
$\frac{1}{n}Q_{X_iY_iU_i}$ as the $(X, Y, U)$-marginal of $\tilde{Q}$.

We next show that, $\Info_{\tilde{Q}}(Y; U)$---the mutual information
between $Y$ and $U$ under $\tilde{Q}$---is upper-bounded by $R$. To
that end first observe that by definition of $\tilde{Q}$ (in
\eqref{eq:amos_clar_500} and~\eqref{eq:q_mix}) we can express
$\Info_{\tilde{Q}}(Y; U)$ as
%in terms of $(Y^n, U^n)$ (drawn according to $Q$) as follows:
\begin{equation}\label{eq:block_1a}
  \Info_{\tilde{Q}}(Y; U) %= \Entr(Y) - \Entr(Y \mid U)
  =
  \frac{1}{n}\sum_{i = 1}^n
  \Bigl(\Entr_{Q}(Y_i) - \Entr_{Q}(Y_i \mid U_i)\Big).
\end{equation}
So continuing from the RHS of \eqref{eq:block_1a}, with all information-theoretic
quantities implicitly evaluated w.r.t.~$Q$:
\begin{align}
  \frac{1}{n}\sum_{i = 1}^n \Bigl(\Entr(Y_i) - \Entr(Y_i \mid
  U_i)\Big) &= \frac{1}{n}\sum_{i = 1}^n \Bigl(\Entr(Y_i) - \Entr(Y_i
  \mid X^{i - 1}, Y^{i - 1}, M)\Big)\\
  &\stackrel{(a)}{=} \frac{1}{n}\sum_{i = 1}^n \Bigl(\Entr(Y_i)
  - \Entr(Y_i \mid Y^{i - 1}, M)\Big)\\
  &\stackrel{(b)}{=} \frac{1}{n}\sum_{i = 1}^n \Bigl(\Entr(Y_i
  \mid Y^{i - 1}) - \Entr(Y_i \mid Y^{i - 1}, M)\Big)\\
  &= \frac{1}{n}\sum_{i = 1}^n \Info(Y_i; M \mid Y^{i - 1})\\
  &= \frac{1}{n}\Info(Y^n; M)\\
  &\leq \frac{1}{n}\Entr(M)\label{eq:block_1b}\\
  &\stackrel{(c)}{\leq} R,\label{eq:info_ub}
\end{align}
where (a) holds because, under $Q$,
$X^{i - 1} \to (Y^{i - 1}, M) \to Y_i$ \eqref{eq:amos_clar700Markov};
(b) holds because $Y^n$ is IID under $Q$; and (c)~holds because $M$
can assume at most $2^{nR}$ distinct values.

We now use \eqref{eq:exp_lb}, \eqref{eq:rd_lb},
\eqref{eq:d_single_letter}, and \eqref{eq:info_ub} to derive the converse
part of Theorem \ref{thm:main} as stated in
\eqref{eq:target_lb}. Starting with~\eqref{eq:exp_lb}, we
use~\eqref{eq:rd_lb} and~\eqref{eq:d_single_letter} to obtain 
%First, we
%use \eqref{eq:rd_lb} to lower-bound $\E_Q[\log(G_n(X^n \mid f_n(Y^n)))]$ by
%$n\RD{d}{D}(\tilde{Q}_{X \mid U}) - n\delta_n$, and use
%\eqref{eq:d_single_letter} to replace $\D(Q_{X^nY^n} \| P_{X^nY^n})$ by
%$n\D(\tilde{Q}_{XYU} \| P_{XY}\tilde{Q}_{U \mid Y})$; so, adjusting
%\eqref{eq:exp_lb} accordingly,
\begin{equation}
  \E_{P_{X^nY^n}}[G_n(X^n \mid f_n(Y^n))^\rho] \geq 2^{n(\rho\RD{d}{D}(\tilde{Q}_{X \mid U}) - \D(\tilde{Q}_{XYU} \|
  P_{XY}\tilde{Q}_{U \mid Y}) - \delta_n)},\label{eq:tmp_exp_lb}
\end{equation}
where the PMF $\tilde{Q}$ on the RHS of \eqref{eq:tmp_exp_lb} is defined in
\eqref{eq:q_mix}. Taking the logarithm and dividing by $n$ on both sides,
\begin{equation}
  \frac{1}{n}\log(\E[G_n(X^n \mid f_n(Y^n)^\rho]) \geq \rho\RD{d}{D}(\tilde{Q}_{X \mid U}) - \D(\tilde{Q}_{XYU} \|
  P_{XY}\tilde{Q}_{U \mid Y}) - \delta_n.\label{eq:intermediate_lb}
\end{equation}
Since the choice of $Q_Y$ and $\{Q_{X_i \mid Y_iU_i}\}_{i = 1}^n$ in
\eqref{eq:q_full} is arbitrary, so is that of $\tilde{Q}_Y$ and
$\tilde{Q}_{X \mid YU}$ in the $(X, Y, U)$-marginal
$\tilde{Q}_{XYU} = \tilde{Q}_Y\tilde{Q}_{U \mid Y}\tilde{Q}_{X \mid
  YU}$ of~$\tilde{Q}$~\eqref{eq:q_mix}. We are therefore at liberty to
choose those so as to obtain the tightest bound. Things are different
with regard to $\tilde{Q}_{U \mid Y}$, because it is influenced by the
helper~$f_n$, and we must ensure that the bound is valid for all
helpters.
% To obtain a bound that is valid for all choices of $f_n$, we should
% ostensibly
Ostensibly, we should therefore consider the choice of $\tilde{Q}_{U \mid Y}$ that
yields the loosest bound. However, $\tilde{Q}_{U \mid Y}$ cannot be arbitrary:
%
%
%We are not, however, at
%liberty to choose $\tilde{Q}_{U \mid Y}$ because it is influenced by
%the given helper $f_n$. Nevertheless,
%$\tilde{Q}_{U \mid Y}$ cannot be arbitrary:
irrespective of our choice of $\tilde{Q}_Y$, the mutual information
$I_{\tilde{Q}}(U; Y)$ must be upper bounded by $R$
\eqref{eq:info_ub}. 

These considerations allow to infer form~\eqref{eq:intermediate_lb} that
%Optimizing our bound~\eqref{eq:intermediate_lb} w.r.t.\ $\tilde{Q}_Y$ and
%$\tilde{Q}_{X \mid YU}$, we obtain
% \begin{align}
%   &\frac{1}{n}\log(\E[G_n(X^n \mid f_n(Y^n)^\rho])\nonumber\\
%   &\geq \sup_{\tilde{Q}_Y} \sup_{\tilde{Q}_{X \mid YU}}
%   \Big(\rho\RD{d}{D}(\tilde{Q}_{X \mid U}) - \D(\tilde{Q}_{XYU} \|
%   P_{XY}\tilde{Q}_{U \mid Y})\Big) - \delta_n.
% \end{align}
% To conclude the proof of \eqref{eq:target_lb}, recall that for $(Y, U) \sim
% \tilde{Q}_{YU}$, $\Info(Y; U) \leq R$  (see \eqref{eq:info_ub}). Optimizing
% w.r.t.\ to $\tilde{Q}_{U \mid Y}$ under this constraint, we thus obtain
\begin{align}
  &\frac{1}{n}\log(\E[G_n(X^n \mid f_n(Y^n)^\rho])\nonumber\\
  &\quad\quad \geq \sup_{\tilde{Q}_Y} \inf_{Q_{\tilde{U} \mid Y}\colon
  \Info(\tilde{Q}_{Y; U}) \leq R} \sup_{\tilde{Q}_{X \mid YU}}
  \Big(\rho\RD{d}{D}(\tilde{Q}_{X \mid U}) - \D(\tilde{Q}_{XYU} \|
  P_{XY}\tilde{Q}_{U \mid Y})\Big) - \delta_n,
\end{align}
which, upon taking $n$ to infinity, yields \eqref{eq:target_lb}.
\end{proof}

\appendices

\section{}\label{app:a}
% A Reverse Wyner Inequality
\begin{lemma}\label{lmm:entropy_lb}
Let $X$ be a chance variable taking values in the finite set $\set{X}$
according to some PMF $P$, and let $f$ be
a bijection from $\set{X}$ to $[1:|\set{X}|]$. Then, for
$X \sim P$,
\begin{equation}\label{eq:entr_lb}
  \E[\log(f(X))] \geq \Entr(X) - \log(\ln(|\set{X}|) + 3/2).
\end{equation}
\end{lemma}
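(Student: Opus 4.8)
The plan is to control the gap $\Entr(X) - \E[\log f(X)]$ directly and show it is at most $\log(\ln|\set{X}| + 3/2)$. First I would rewrite this gap as a single $P$-expectation,
\[
  \Entr(X) - \E[\log f(X)] = \sum_{x \in \set{X}} P(x) \log \frac{1}{P(x)\,f(x)},
\]
where the sum effectively ranges over the support of $P$ (with the convention that symbols $x$ with $P(x) = 0$ contribute nothing). The right-hand side is the $P$-expectation of $\log\bigl(1/(P(X)f(X))\bigr)$, so, since $\log$ is concave, Jensen's inequality moves the logarithm outside:
\[
  \sum_{x} P(x) \log \frac{1}{P(x)\,f(x)} \leq \log\!\Bigg(\sum_{x} P(x)\,\frac{1}{P(x)\,f(x)}\Bigg) = \log\!\Bigg(\sum_{x} \frac{1}{f(x)}\Bigg).
\]
This Jensen step is the crux of the argument; everything after it is an elementary estimate.

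The remaining step exploits that $f$ is a bijection onto $[1:|\set{X}|]$: the values $f(x)$ are precisely the integers $1, \ldots, |\set{X}|$, so $\sum_{x} 1/f(x)$ equals the harmonic number $\sum_{i=1}^{|\set{X}|} 1/i$ (and is only smaller if the sum is restricted to the support of $P$). It then remains to bound this harmonic number, for which I would invoke the standard integral estimate $\sum_{i=1}^{N} 1/i \leq 1 + \ln N \leq \ln N + 3/2$. Chaining the three displays yields $\Entr(X) - \E[\log f(X)] \leq \log(\ln|\set{X}| + 3/2)$, which is exactly~\eqref{eq:entr_lb} upon rearranging.

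I do not anticipate a serious obstacle; the only points needing a little care are (i) restricting all sums to the support of $P$ so that the Jensen step is genuinely applied to a $P$-expectation and the zero-probability symbols are absorbed by the $0\log 0 = 0$ convention, and (ii) noting that dropping those symbols can only decrease $\sum_{x} 1/f(x)$, so the harmonic-number bound remains valid. The constant $3/2$ is merely convenient slack: the sharper $1 + \ln|\set{X}|$ already works, and the stated bound also covers the degenerate case $|\set{X}| = 1$, in which both sides of~\eqref{eq:entr_lb} vanish.
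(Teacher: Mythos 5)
Your proposal is correct and follows essentially the same route as the paper's proof: rewrite the gap $\Entr(X) - \E[\log f(X)]$ as the $P$-expectation of $\log\bigl(1/(P(X)f(X))\bigr)$, apply Jensen's inequality to pull the logarithm outside, identify $\sum_x 1/f(x)$ as the harmonic number via the bijectivity of $f$, and bound it by $\ln|\set{X}| + 3/2$. The paper likewise disposes of zero-probability outcomes at the outset, so no further comment is needed.
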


\begin{proof}
Outcomes of zero probability contribute neither to the LHS nor to the
RHS of \eqref{eq:entr_lb}, and we therefore assume w.l.g. that
$P(x) > 0$ for every $x \in \set{X}$. We then have
\begin{align}
  \E[\log(f(X))] &= \sum_{x \in \set{X}} P(x)\log(f(x))\\
  &= \sum_{x \in \set{X}} P(x)\log\left(\frac{f(x)P(x)}{P(x)}\right)\\
  &= \Entr(X) + \sum_x P(x)\log(f(x)P(x))\\
  &= \Entr(X) - \sum_x P(x)\log\left(\frac{1}{f(x)P(x)}\right)\\
  &\stackrel{(a)}{\geq} \Entr(X) - \log\left(\sum_x
  \frac{1}{f(x)}\right)\\
  &\stackrel{(b)}{=} \Entr(X) - \log\left(\sum_{i = 1}^{|\set{X}|}
  \frac{1}{i}\right)\\
  &\stackrel{(c)}{\geq} \Entr(X) - \log(\ln(|\set{X}|) + 3/2),
\end{align}
where (a) follows from Jensen's inequality; (b) holds because $f$
maps onto $[1:|\set{X}|]$; and (c) holds because $\sum_{i = 1}^n
1/i$ is upper-bounded by $\ln(n) + 3/2$.
\end{proof}

\begin{corollary}\label{cor:entropy_lb}
Let $\set{X}$ be a finite set, and $f$ a bijection from
$\set{X}^n$ to $[1:|\set{X}|^n]$. Then, for any chance variable
$X^n$ on~$\set{X}^n$,
\begin{equation}
  \E[\log(f(X^n))] \geq \Entr(X^n) - n\delta_n,
\end{equation}
where $\delta_n = \delta_n(|\set{X}|)$ and for every fixed $|\set{X}|$,
\begin{equation}
  \delta_n \downarrow 0.
\end{equation}
\end{corollary}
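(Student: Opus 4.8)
The plan is to obtain the corollary as an immediate specialization of Lemma~\ref{lmm:entropy_lb} to the product alphabet, followed by an elementary asymptotic estimate. First I would invoke Lemma~\ref{lmm:entropy_lb} with the finite set taken to be $\set{X}^n$ (whose cardinality is $|\set{X}|^n$), with the underlying PMF taken to be the law of $X^n$, and with the given bijection $f\colon \set{X}^n \to [1:|\set{X}|^n]$. Since all the hypotheses of the lemma are then met verbatim, \eqref{eq:entr_lb} yields
\begin{equation*}
  \E[\log(f(X^n))] \geq \Entr(X^n) - \log\bigl(\ln(|\set{X}|^n) + 3/2\bigr).
\end{equation*}

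Next I would simplify the penalty term using $\ln(|\set{X}|^n) = n\ln(|\set{X}|)$ and define
\begin{equation*}
  \delta_n \triangleq \frac{1}{n}\log\bigl(n\ln(|\set{X}|) + 3/2\bigr),
\end{equation*}
so that the displayed bound takes exactly the asserted form $\E[\log(f(X^n))] \geq \Entr(X^n) - n\delta_n$, with $\delta_n$ depending only on $n$ and on $|\set{X}|$, as required.

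It then remains only to verify that, for every fixed $|\set{X}|$, $\delta_n \downarrow 0$. Writing $a \triangleq \ln(|\set{X}|)$, which is positive because $P_X$ (and hence $\set{X}$) is nontrivial, the numerator $\log(an + 3/2)$ grows logarithmically---hence sublinearly---in $n$, so $\delta_n \to 0$. For the monotonicity I would treat $n$ as a continuous variable and check that the derivative of $n \mapsto \tfrac{1}{n}\log(an + 3/2)$ is negative as soon as $\log(an + 3/2)$ exceeds the bounded quantity $an/\bigl((an+3/2)\ln 2\bigr)$, which holds for all sufficiently large $n$; thus $\delta_n$ is eventually decreasing, and only its vanishing is used where the corollary is applied in the converse. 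I do not anticipate any genuine obstacle: the corollary is essentially a restatement of Lemma~\ref{lmm:entropy_lb} with $\set{X}$ replaced by $\set{X}^n$, and the sole content beyond that substitution is the routine observation that a term of order $\log n$, divided by $n$, tends to zero.
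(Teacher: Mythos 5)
Your proposal is correct and follows essentially the same route as the paper: invoke Lemma~\ref{lmm:entropy_lb} with $\set{X}$ replaced by $\set{X}^n$ and set $\delta_n = \tfrac{1}{n}\log(\ln(|\set{X}|^n)+3/2) \to 0$. The paper states this in one line without addressing the monotonicity $\delta_n \downarrow 0$ at all, so your extra remark that only the vanishing of $\delta_n$ is used downstream is a fair (and slightly more careful) way to close that cosmetic gap.
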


\begin{proof}
The corollary follows from Lemma \ref{lmm:entropy_lb} and the
fact that when $|\set{X}|$ is fixed,
\begin{equation}
  \lim_{n \to \infty} \frac{\log(\ln(|\set{X}^n|) + 3/2)}{n} = 0.
\end{equation}
\end{proof}

\section{}\label{app:card}
% Proof of the Cardinality Bound of $U$
We prove that restricting $U$ to take values in a set of cardinality
$|\set{Y}| + 1$ does not alter \eqref{eq:target_exponent}.  To that
end, we first express the objective function
in~\eqref{eq:target_exponent}
%$\rho\RD{d}{D}(Q_{X \mid U}) - \D(Q_{XYU} \| P_{XY}Q_{U \mid Y})$
%in~\eqref{eq:target_exponent} can be expressed as an expectation
%w.r.t. $Q_U$. Specifically,
as an expectation over $U$ of a quantity
$\Psi(Q_{Y|U=u}, Q_{X|YU=u})$ that depends explicitly on $Q_{Y|U=u}$,
$Q_{X|YU=u}$ and implicitly on the given joint PMF $P_{XY}$ and the PMF $Q_{Y}$
(which is determined in the outer maximization). Specifically,
\begin{subequations}
  \label{block:hot200}
\begin{IEEEeqnarray}{rCl}
  \rho\RD{d}{D}(Q_{X \mid U}) - \D(Q_{XYU} \| P_{XY}Q_{U \mid Y}) & =
  & \sum_{u \in \set{U}} Q_{U}(u) \, \Psi(Q_{Y \mid U=u}, Q_{X \mid YU=u}), \label{eq:hot200a}
\end{IEEEeqnarray}
with
\begin{IEEEeqnarray}{rCl}
  \Psi(Q_{Y \mid U=u}, Q_{X \mid YU=u}) & = & \rho\RD{d}{D}(Q_{X \mid U=u}) + 
        H(Q_{Y}) - H(Q_{Y \mid U=u}) - D(Q_{Y \mid
        U = u}Q_{X \mid YU = u} \| P_{XY}) \label{eq:hot200b}
      \IEEEeqnarraynumspace
\end{IEEEeqnarray}
\end{subequations}
% .that is determined by the
% given , by $Q_{Y|U=u}$, and by $Q_{X|YU=u}$.
% Specifically, we claim that
% \begin{IEEEeqnarray}{rCl}
%   \IEEEeqnarraymulticol{3}{l}{%
%     \rho\RD{d}{D}(Q_{X \mid U}) - \D(Q_{XYU} \| P_{XY}Q_{U \mid Y})
% %
%   }\nonumber\\*\quad%
%   & = & \sum_{u} Q_{U}(u) \Bigl(
%         \rho\RD{d}{D}(Q_{X \mid U=u}) + 
%         H(Q_{Y}) -  H(Q_{Y|U=u}) -  D(Q_{Y |
%         U = u}Q_{X | Y, U = u} || P_{XY}) \Bigr) 
% \end{IEEEeqnarray}
where $\RD{d}{D}(Q_{X \mid U=u})$ is determined by $Q_{Y|U=u}$ and
$Q_{X|YU=u}$ via the relation
\begin{IEEEeqnarray}{rCl}
  Q_{X \mid U}(x \mid u) & = & \sum_{y \in \set{Y}} Q_{Y \mid U}(y \mid u) \, Q_{X \mid YU}(x \mid y,u).
\end{IEEEeqnarray}
Indeed, \eqref{block:hot200} follow from
\begin{align}
  \D(Q_{XYU} \| P_{XY}Q_{U \mid Y}) &= \E_{Q_{XYU}} \!\!\Bigg[
  \log \left(\frac{Q_{XY \mid U}(X, Y \mid U) \, Q_{U}(U)}
    {P_{XY}(X, Y) \,Q_{U \mid Y}(U \mid Y)} \right)\Bigg]\\
  &= -H(Q_{U}) + H(Q_{U \mid Y}) + \E_{Q_{XYU}} \!\!\Bigg[
\log\left(\frac{Q_{XY \mid U}(X, Y \mid U)}{P_{XY}(X, Y)} \right) \Bigg]\\
  &= -H(Q_{Y}) + H(Q_{Y \mid U}) + \E_{Q_{XYU}} \!\!\Bigg[
\log\left(\frac{Q_{XY \mid U}(X, Y \mid U)}{P_{XY}(X, Y)} \right) \Bigg]\\
  &= -H(Q_{Y}) + H(Q_{Y \mid U}) + \E_{Q_{XYU}} \!\!\Bigg[
\log\left(\frac{Q_{Y \mid U}(Y \mid U) \, Q_{X|YU}(X \mid Y, U)}{P_{XY}(X, Y)}
\right) \Bigg]\\
  &= - \sum_{u \in \set{U}} Q_{U}(u) \Bigg( H(Q_{Y}) - H(Q_{Y \mid U=u})\nonumber\\
  &\quad\quad - \sum_{(x,y) \in \set{X} \times \set{Y}}
Q_{Y \mid U}(y \mid u) \, Q_{X \mid YU}(x \mid y,u) \log\left(\frac{Q_{Y \mid U}(y \mid u) \,
    Q_{X \mid YU}(x \mid y, u)}{P_{XY}(x, y)}\right)
\Bigg)\\
  &= - \sum_{u \in \set{U}} Q_{U}(u) \Bigg( H(Q_{Y}) - H(Q_{Y \mid U=u}) - D(Q_{Y \mid
        U = u}Q_{X \mid YU = u} \| P_{XY})
\Bigg).
\end{align}

The representation~\eqref{block:hot200} shows that the inner
maximization in~\eqref{eq:target_exponent} can be performed separately
for every $u$. Defining
\begin{IEEEeqnarray}{rCl}
  \Psi^{\ast}(Q_{Y \mid U=u}) = \max_{Q_{X \mid YU=u}} \Psi(Q_{Y \mid U=u}, Q_{X \mid YU=u})
\end{IEEEeqnarray}
we can express~\eqref{eq:target_exponent} as
\begin{IEEEeqnarray}{rCl}
\sup_{Q_Y} \inf_{Q_{U \mid Y}: \Info(Q_{Y;U}) \leq R} \sum_{u \in \set{U}} Q_{U}(u)
\,   \Psi^{\ast}(Q_{Y \mid U=u}).
\end{IEEEeqnarray}
We next view the inner minimization above as being over all pairs $(Q_{U},Q_{Y \mid U})$
with the objective function being 
%The objective function is influenced by the pair $(Q_{U},Q_{Y|U})$ only via
\begin{IEEEeqnarray}{rCl}
  \sum_{u \in \set{U}} Q_{U}(u)
\,   \Psi^{\ast}(Q_{Y \mid U=u});
\end{IEEEeqnarray}
with the constraint on the $Y$-marginal
\begin{IEEEeqnarray}{rCl}
  \sum_{u \in \set{U}} Q_{U}(u) \, Q_{Y \mid U}(y \mid u) & = & Q_{Y}(y), \qquad \forall y
  \in \set{Y};
\end{IEEEeqnarray}
and the constraint on the mutual information 
\begin{IEEEeqnarray}{rCl}
  \sum_{u \in \set{U}} Q_{U}(u) \, H(Q_{Y \mid U=u}) & \geq & H(Q_{Y}) - R.
\end{IEEEeqnarray}
Since the objective function and constraints are linear in $Q_{U}$, it
follows from Carath\'{e}odory's theorem (for connected sets) that the
cardinality of $\set{U}$ can be restricted to $|\set{Y}| + 1$.

\ifCLASSOPTIONcaptionsoff
  \newpage
\fi

% trigger a \newpage just before the given reference
% number - used to balance the columns on the last page
% adjust value as needed - may need to be readjusted if
% the document is modified later
%\IEEEtriggeratref{8}
% The "triggered" command can be changed if desired:
%\IEEEtriggercmd{\enlargethispage{-5in}}

\bibliographystyle{IEEEtran}
\bibliography{./bibliography.bib}

% You can push biographies down or up by placing
% a \vfill before or after them. The appropriate
% use of \vfill depends on what kind of text is
% on the last page and whether or not the columns
% are being equalized.

%\vfill

% Can be used to pull up biographies so that the bottom of the last one
% is flush with the other column.
%\enlargethispage{-5in}

% that's all folks
\end{document}